\documentclass[conference,letterpaper,10pt]{IEEEtran}
\onecolumn
\usepackage{etoolbox}
\makeatletter
\patchcmd{\thebibliography}{\footnotesize}{\normalsize}{}{}
\makeatother

\usepackage[utf8]{inputenc} 
\usepackage[T1]{fontenc}
\usepackage{url}
\usepackage{ifthen}
\usepackage{cite}
\usepackage[cmex10]{amsmath} 

\usepackage{amsmath,amsthm,amssymb}
\usepackage{enumerate}
\usepackage{color}
\usepackage{xcolor}
\usepackage{url}
\usepackage{balance}
\usepackage{graphicx} 
\usepackage{mathrsfs}
\usepackage{epsfig}
\usepackage{verbatim}
\usepackage{setspace}
\usepackage{cite}
\usepackage{bbm}
\usepackage{hyperref}

  \usepackage{pgfplots}
  \pgfplotsset{compat=newest}
  \usetikzlibrary{plotmarks}
  \usetikzlibrary{arrows.meta}
  \usepgfplotslibrary{patchplots}
  \usepackage{grffile}

\usepackage{multicol}
\usepackage{lipsum}
\usepackage{etoolbox}
\usepackage{algpseudocode}
\usepackage{algorithmicx}
\usepackage{algorithm}

\usepackage{pgfplots}
  \pgfplotsset{compat=newest}
  \usetikzlibrary{plotmarks}
  \usetikzlibrary{arrows.meta}
  \usepgfplotslibrary{patchplots}
  \usepackage{grffile}
  \usepackage{amsmath}

\newtheorem{theorem}{Theorem}
\newtheorem{lemma}{Lemma}

\newtheorem{corollary}{Corollary}
\newtheorem{definition}{Definition}

\newtheorem{remark}{Remark}

\newcommand{\bv}{\mathbf{v}}
\newcommand{\bx}{\mathbf{x}}

\newcommand{\by}{\mathbf{y}}

\newcommand{\bq}{\mathbf{q}}
\newcommand{\bp}{\mathbf{p}}
\newcommand{\bw}{\mathbf{w}}

\newcommand{\bu}{\mathbf{u}}

\newcommand{\cU}{{\cal U}}

\newcommand{\cX}{{\cal X}}

\newcommand{\cL}{{\cal L}}

\newcommand{\cP}{{\cal P}}

\newcommand{\cQ}{{\cal Q}}

\usepackage{pgfplots}
\pgfplotsset{compat=1.18}
\newlength\figureheight
\newlength\figurewidth
\begin{document}
\title{Geometry of Rényi Entropy on the Majorization Lattice} 


\author{%
  \IEEEauthorblockN{Anuj Kumar Yadav}
  \IEEEauthorblockA{School of Computer \& Communication Sciences \\
                    École Polytechnique Fédérale de Lausanne (EPFL)\\
                    Email: anuj.yadav@epfl.ch}
  \and
  \IEEEauthorblockN{Yanina Y. Shkel}
    \IEEEauthorblockA{School of Computer \& Communication Sciences \\
                    École Polytechnique Fédérale de Lausanne (EPFL)\\
                    Email: yanina.shkel@epfl.ch}
}

\maketitle


\begin{abstract}
 Majorization is a stochastic ordering relation that compares the relative diversity of probability distributions with numerous applications in econometrics, spectral theory and ecology. It is well-known that the majorization partial order forms a complete lattice on the set of ordered probability distributions. In this work, we study the properties of Rényi entropy on the majorization lattice. We establish a fundamental relation between the comonotone coupling and the independent coupling associated with a collection of marginal distributions. Consequently, we show that, for every order $\alpha \in [0,\infty]$, the Rényi entropy is subadditive on the majorization lattice. We further characterize the supermodular regime, showing that Rényi entropy is supermodular on the majorization lattice for $\alpha \in \{0\} \,\cup \, [1,\infty]$. For the Tsallis entropy, we show that it also satisfies subadditivity on the majorization lattice, for every order $\alpha \in [0,\infty)$. Finally we show that, unlike the Rényi entropy, the Tsallis entropy is supermodular on the majorization lattice for every $\alpha \in [0,\infty)$.
 \end{abstract}

\section{Introduction}

Majorization~\cite{marshall} is a widely studied mathematical concept for comparing the dispersion or inequality of the components of one vector relative to another. Its origin goes back to the work of Hardy, Littlewood, and P\'olya on inequalities~\cite{hardy, hardy1952inequalities}, which was further developed through Schur's study of symmetric convex functions and determinant inequalities~\cite{Schur1923}. Beyond pure inequality theory, majorization is closely connected to the Lorenz curve~\cite{lorenz_income} and the Pigou--Dalton principle of transfers~\cite{pigou,dalton}, which provide foundational tools for comparing income and wealth distributions in economics. Thus, majorization offers a natural tool for comparing concentration,
dispersion, and inequality across probability distributions and more general 
vectors.
\smallbreak
Majorization-based ideas have appeared in several domains recently. In information theory,~\cite{van} studied the relation between Rényi divergences and majorization through Lorenz diagrams and Markov ordering. \cite{catalytic} studied matrix majorization in large samples and its catalytic formulation, while Elkouss et al. derived finite sufficient conditions for catalytic majorization and thermal transformations~\cite{catalytic_2}. In wireless communications and signal processing, majorization has been used to compare eigenvalues and power allocation structures in MIMO transceiver design~\cite{mimo}. In quantum information theory, Nielsen~\cite{nielsenn} characterized
pure-state entanglement transformations using majorization, while~\cite{gour} developed a quantum-majorization framework for state transformations in quantum
thermodynamics. Majorization methods also remain central in economics and statistics with recent work on grouped-data estimation and multivariate Lorenz comparisons~\cite{jorda,fan}. In~\cite{nst}, the authors developed Newton-Simpson-type inequalities using majorization theory. It has also been applied to study the minimum entropy couplings~\cite{mec, cheuk, compton}, functional representations~\cite{mir,yadav-shkel,yadav25}, and causal inference~\cite{eci1,eci2}. In~\cite{yadav-shkel}, the authors introduced a stronger notion of an information-spectrum-based majorization (cf.~\cite{cheuk_2}) to derive information-theoretic lower bounds on minimum Rényi entropy couplings (see also~\cite{compton_aistats}).
\smallbreak
Bapat~\cite{bapat} showed that the majorization order forms a complete lattice on the set of ordered probability distributions. Entropy inequalities on lattices constitute another important line of work. More generally, submodularity and supermodularity are structural properties that provide monotonicity, convexity, and optimization principles on ordered spaces~\cite{lova,fuji2}. On the Boolean lattice formed by subsets of a finite set of random variables, Shannon entropy was shown to be submodular~\cite{fuji}. This property underlies the polymatroidal structure of Shannon entropy and many basic information inequalities~\cite{fuji,yeung}. It also plays an important role in applications such as secret sharing and the entropy methods in combinatorics~\cite{secret,jukna}. Cicalese and Vaccaro~\cite{cicalese2002supermodularity} initiated the analogous study on the majorization lattice, proving that Shannon entropy is supermodular and subadditive on this lattice. They also suggested that these inequalities may
support an information-theoretic calculus within the lattice, analogous to joint entropy, conditional entropy, and mutual information in the classical setting.
The majorization lattice has since been studied in several related directions~\cite{lattice_1,lattice_2,lattice_3,lattice_4}.


\subsection{Our Contributions}
We study structural properties of R\'enyi entropy on the majorization lattice. 
Our results recover the subadditivity and supermodularity phenomena of Shannon entropy studied in~\cite{cicalese2002supermodularity} and extend them to the Rényi entropy. Our approach is based on the analysis of couplings and Lorenz-curve geometry, yielding shorter and
more conceptual proofs than the case-by-case analysis for the Shannon entropy in~\cite{cicalese2002supermodularity}. 

\smallbreak
\noindent $\bullet$ \textbf{Relation between couplings. }
We show that for any two marginal probability mass functions (PMFs), the comonotone coupling, also known as the north-west coupling, majorizes the independent coupling. Our proof also extends to the setting of more than two marginal PMFs.
\smallbreak
\noindent $\bullet$ \textbf{Subadditivity of Rényi entropy. }
Using this coupling result, we prove that for every order $\alpha\in[0,\infty]$, the Rényi entropy is subadditive on the majorization lattice. Furthermore, we show that this property extends beyond two PMFs in the majorization lattice.
\smallbreak
\noindent $\bullet$ \textbf{Supermodularity of Rényi entropy. }
We prove that Rényi entropy is supermodular on the majorization lattice for every $\alpha\in\{0\}\,\cup\,[1,\infty]$. Furthermore, for $\alpha\rightarrow0$ and $\alpha\rightarrow\infty$, the inequality holds with equality for all pairs of PMFs; equivalently, $H_0$ and $H_\infty$ are modular on the lattice. Finally, we show that for $\alpha\in(0,1)$, Rényi entropy is neither supermodular nor submodular.
\smallbreak
\noindent $\bullet$ \textbf{Subadditivity of Tsallis entropy. } We show that, like the Rényi entropy, for every order $\alpha \in [0,\infty)$ the Tsallis entropy is subadditive on the majorization lattice. However, our proof technique substantially differs from the Rényi entropy case because Tsallis entropy does not satisfy additivity for the independent coupling, unlike the Rényi entropy. We extend the existing result on the subadditivity of the Tsallis entropy for order $\alpha \geq 1$ in~\cite{bhatia}.
\smallbreak
\noindent $\bullet$ \textbf{Supermodularity of Tsallis entropy. }
Finally, we show that, unlike the Rényi entropy, the Tsallis entropy is supermodular on the majorization lattice for every $\alpha\in[0,\infty)$. Furthermore for $\alpha=0$, the inequality holds with equality for all pairs of PMFs; equivalently, $T_0$ is modular on the lattice. This is in contrast to the submodularity of Tsallis entropy in~\cite{bhatia}, which we discuss in detail in Remark~\ref{rem:bsk}.

\section{Notations and Preliminaries}\label{sec:preliminaries}

The set of all probability mass functions (PMFs) of size $n$ is denoted by $\mathcal{P}'_n$. Further, we use $\mathcal{P}_n$ to denote the set of all ordered PMFs of size $n$, with masses in the non-increasing order. Clearly, $\mathcal{P}^{}_n \subset \mathcal{P}'_n$.
We denote PMFs by a bold-face letter, say $\mathbf{p}:=(p_1,p_2,\cdots,p_n) \in \mathcal{P}'_n$. Given a random variable $X$, its support is denoted by $\cX$, while a realization is denoted by lower case letter, for example, $x\in \cX$. We use $[n]$ to denote the set $\{1,2,\ldots,n\}$. Further, $[a,b]$ denotes the closed interval from $a$ to $b$ where $a \leq b$.

\begin{definition}[Rényi entropy]
The R\'{e}nyi entropy~\cite{renyi} of order $\alpha \in [0,\infty]$ of a random variable $X \sim \mathbf{p}$ is defined as
\begin{align}
H_{\alpha}(X) = H_{\alpha}(\mathbf{p})\triangleq \frac{1}{1-\alpha} \log\left( \sum_{i=1}^{n}\left(p_i\right)^ {\alpha}\right).
\end{align}
As $\alpha \rightarrow 1$, the Rényi entropy reduces to the well-known Shannon entropy. 
\end{definition}
\begin{definition}[Tsallis entropy]
The Tsallis entropy~\cite{tsallis} of order $\alpha \in [0,\infty)$ of a random variable $X \sim \mathbf{p}$ is defined as
\begin{align}
T_{\alpha}(X) = T_{\alpha}(\mathbf{p})\triangleq \dfrac{1-\sum^{n}_{i=1}(p_i)^{\alpha}}{\alpha-1}.
\end{align}
As $\alpha \rightarrow 1$, the Tsallis entropy also reduces to the Shannon entropy. An interesting property of Tsallis entropy which makes it different from Rényi entropy is pseudo-additivity i.e., for two independent random variables $X$ and $Y$, we have
\begin{align}
    T_{\alpha}(XY) =T_{\alpha}(X)+T_{\alpha}(Y)+(1-\alpha)T_{\alpha}(X)T_{\alpha}(Y),
\end{align}
for every $\alpha \in [0,\infty) \setminus\{1\}$.
\end{definition}

Tsallis entropy has been widely used as a generalized uncertainty measure for systems that deviate from the classical Boltzmann–Gibbs framework~\cite{tsallis2,tsallis3}, particularly in non-extensive statistical mechanics, chaotic dynamics, and systems with long-range interactions or memory effects. It has also found applications in image processing~\cite{image}, especially entropy-based image thresholding and segmentation, and in cosmological models such as Tsallis holographic dark energy~\cite{holography}. More recently, Tsallis-based divergences and entropy losses also appear in machine learning~\cite{tsd}, including supervised classification under class imbalance and adaptation methods for neural networks (cf.~\cite{tsalliscf1,tsalliscf2}).

\smallbreak
Throughout the paper, we assume that the PMFs are arranged in the non-increasing order i.e., $\mathbf{p} \in \mathcal{P}_n$. Let $\bp,\bq\in\mathcal P_n$, and define
\begin{align*}
    P_i\triangleq \sum_{k=1}^i p_k,
    \quad\text{and}\quad
    Q_j \triangleq \sum_{k=1}^j q_k,
\end{align*}
with the convention $P_0=Q_0=0$. The independent coupling
$\pi_I(\bp,\bq)$ is the joint PMF on $[n]\times[n]$ given by
\begin{align*}
    \pi_I(\bp,\bq)(i,j)\triangleq p_iq_j.
\end{align*}
The comonotone coupling $\pi_c(\bp,\bq)$ (also known as the north-west coupling) is the joint PMF
defined by
\begin{align*}
    \pi_c(\bp,\bq)(i,j)
    \triangleq
    \left[
        \min\{P_i,Q_j\}
        -
        \max\{P_{i-1},Q_{j-1}\}
    \right]_+,
\end{align*}
where $[x]_+=\max\{x,0\}$. Equivalently, $\pi_c(\bp,\bq)(i,j)$ is the length of the
overlap between the intervals $(P_{i-1},P_i]$ and $(Q_{j-1},Q_j]$.
\begin{definition}[{Aggregation $(\sqsubseteq)$}]
Let $\mathbf{q}=(q_1,\ldots,q_m)$ and $\mathbf{p}=(p_1,\ldots,p_n)$ be probability mass
functions with $m\leq n$. We say that $\mathbf{q}$ is an {aggregation} of $\mathbf{p}$, denoted by $\bp \sqsubseteq \bq$,
if there exists a partition $\{A_1,\ldots,A_m\}$ of $[n]$ such that
\begin{align}
q_i=\sum_{j\in A_i} p _j,
\end{align}
for every $i \in [m]$.
\end{definition}
\begin{definition}[Partial Order $(\preceq)$]\label{def:1}
Let $\cP$ be a set. A homogeneous binary relation on $\cP$, denoted by $\preceq\,$ $\subseteq \,\cP \times \cP$, is called a partial order if, $\forall$ $x,y,z \in \cP$, the following holds
\begin{enumerate}[(a)]
    \item \textbf{Reflexivity:} $(x,x) \in$ $\preceq$, $\forall$ $x \in \cP$.
    \item \textbf{Antisymmetry:} If $(x,y)\in$ $\preceq$ and $(y,x) \in$ $\preceq$, then $x=y$.
    \item \textbf{Transitivity:} If $(x,y)\in$ $\preceq$ and $(y,z) \in$ $\preceq$, then $(x,z) \in$ $\preceq$.
\end{enumerate}
The pair $(\cP,\preceq)$ is called the partially-ordered set (poset). In addition, if for every $x,y\in\mathcal P$, either $x\preceq y$ or $y\preceq x$, then $\preceq$ is called a {total order}.\footnote{Throughout the paper, we write $(a,b)\,\in\,\,\preceq$ equivalently as $a\,\preceq \,b$.}
\end{definition}
\begin{remark}
A binary relation that is reflexive and transitive, but not necessarily antisymmetric is called a {pre-order} or {quasi-order}.
\end{remark}
Now we define the \textit{greatest lower bound (glb)} and the \textit{least upper bound (lub)} with respect to a partial order $\preceq$.
\begin{definition}[\textit{glb} $(\wedge)$ and \textit{{lub}} $(\vee)$]
Let $(\cP,\preceq)$ be a poset, and let $\cQ \subseteq \cP$. An element  $l \in \cP$ is called a lower bound of $\cQ$ if $\forall$ $x \in \cQ$, we have $l \preceq x$. Let $$\cL(\cQ):=\{l \in \cP : l \preceq x, \text{ }\forall x \in \cQ \}$$ 
denote the set of lower bounds of $\mathcal{Q}$. If $\cL(\cQ)$ has a greatest element, then it is called the glb of $\cQ$, denoted by $\bigwedge \cQ$ i.e., $\bigwedge \cQ \in \cL(\cQ)$ and $l \preceq \bigwedge \cQ$, for every $l \in \cL(\cQ)$. 

Similarly, an element  $u \in \cP$ is called an upper bound of $\cQ$ if $\forall$ $x \in \cQ$, we have $x \preceq u$. Let $$\cU(\cQ):=\{u \in \cP : x \preceq u, \text{ }\forall x \in \cQ \}$$ 
denote the set of upper bounds of $\mathcal{Q}$. If $\cU(\cQ)$ has a least element, then it is called the lub of $\cQ$, denoted by $\bigvee \cQ$ i.e., $\bigvee \cQ \in \cU(\cQ)$ and $\bigvee \cQ \preceq u$, for every $u \in \cU(\cQ)$. 
\end{definition}
\begin{definition}[Lattice]\label{rem:pol:1}
Let $(\cP,\preceq)$ be a poset. We say that $\mathcal P$ is a
\emph{lattice} if every pair of elements $x,y\in\mathcal P$ has both a glb and a lub in $\mathcal{P}$, denoted by $x \wedge y$ and $x \vee y$, respectively. In this case, we denote the lattice by the quadruple $(\mathcal P,\preceq,\wedge,\vee)$. If every subset $\mathcal Q\subseteq\mathcal P$ has both a glb and a lub, then $\mathcal P$ is called a {complete lattice}.
Thus, every lattice is a poset, but not every poset is a lattice.
\end{definition}

\section{The Majorization Lattice}

In this section, we define the majorization partial order and state some related properties.

\begin{definition}[Majorization]\label{def:3}
Let $\mathbf{p}$ and $\mathbf{q}$ $\in$ $\cP_{n}$. Then, we say $\bq$ majorizes $\bp$, denoted by $\mathbf{p} \preceq \mathbf{q}$, if
\begin{align}
    \sum_{i=1}^{k}p_{i} \leq    \sum_{i=1}^{k}q_{i}
\end{align}
holds for every $k \in [n]$. 

If $\bp$ and $\bq$ have finite supports with different cardinalities, we pad the shorter PMF with zeros and apply the same definition. The notion of majorization can also be extended to PMFs with countably infinite supports.\footnote{Note that the majorization is not necessarily restricted to sorted PMFs. If the PMFs $\bp$ and $\bq$ are not ordered in the non-increasing order, we may compare the sums of the $k$ largest probability masses instead of the partial sum.}
\end{definition}
\begin{remark}
    Majorization is  a partial order on the set of ordered PMFs i.e., $\cP_{n}$. While it is a pre-order on the set $\cP'_{n}$. 
\end{remark}

\begin{remark}
 If a PMF $\bq$ is an aggregation of a PMF $\bp$ i.e., $\bp \sqsubseteq \bq$ then $\bq$ majorizes $\bp$ i.e., $ \bp \preceq \bq$. Thus, aggregation implies majorization. 
\end{remark}
\begin{definition}[Schur-convex function]\label{def:schurconvex}
A function $f:\cP_{n}\rightarrow \mathbb{R}$ is said to be Schur-convex if for every $\bp,\bq$ $\in$ $\cP_{n}$ satisfying $\bp \preceq \bq$, we have
\begin{align}
 f(\bp)\leq f(\bq)
\end{align}
Equivalently, $f$ is called {Schur-concave} if $-f$ is Schur-convex. R\'{e}nyi entropy $(H_{\alpha}(\bp))$ as well as Tsallis entropy $T_{\alpha}(\bp)$ of any order $\alpha\geq0$ are Schur-concave functions.
\end{definition}
  \begin{remark}
Bapat~\cite{bapat} showed that the poset $(\mathcal P_n,\preceq)$ induced by majorization is a complete lattice. Cicalese and Vaccaro~\cite{cicalese2002supermodularity} gave explicit constructions for the greatest lower bound (glb) and the least upper bound (lub) of any two elements. 

\noindent Given $\bp$ and $\bq$ in $\mathcal{P}_n$, the glb $(\bp \wedge \bq)$ is given by
\begin{align}
    \sum^{k}_{i=1} (\bp \wedge \bq)(i) = \min\left\{\sum_{i=1}^{k}p_i,\sum_{i=1}^{k}q_i \right\}\label{eq:glb}
\end{align}
Define a vector $\beta_{\bp,\bq} \in \mathbb{R}^{n}$ such that
\begin{align}
   \sum^{k}_{i=1} \beta_{\bp,\bq}(i)=\max\left\{\sum_{i=1}^{k}p_i,\sum_{i=1}^{k}q_i \right\} \label{eq:lub}
\end{align}
If $\beta_{\bp,\bq}\in\mathcal P_n$, then the lub $\bp\vee \bq=\beta_{\bp,\bq}$. Otherwise, $\bp \vee \bq$ is obtained by repeatedly replacing each maximal consecutive block of coordinates of $\beta_{\bp,\bq}$ that violates the non-increasing order by its average, until the resulting vector lies in $\mathcal P_n$. Thus, $\bp\vee \bq$ is the least concave majorant (LCM) of $\beta_{\bp,\bq}$.
\end{remark}
Lorenz curves~\cite{lorenz_income,marshall} provide a geometric representation of PMFs and are particularly useful in the study of majorization: the majorization order is equivalently characterized by pointwise comparison of the corresponding Lorenz curves. We use this geometric perspective in the sequel to establish our results.
\begin{definition}[Lorenz curve]\label{def:lorenz}
Let $\bp=(p_1,\ldots,p_n)\in\mathcal P_n$. The {Lorenz curve} of $\bp$ is the piecewise-linear concave function $L_\bp:[0,n]\to[0,1]$ obtained by linear interpolation of the points
\[
\left(k,\sum_{i=1}^k p_i\right),
\qquad k \in [n],
\]
where $\sum_{i=1}^0 p_i:=0$. Equivalently, $L_\bp(0)=0$ and, for
$x\in[k-1,k]$ with $k\in[n]$,
\[
L_\bp(x)
=
\sum_{i=1}^{k-1}p_i + (x-k+1)p_k .
\]
\end{definition}

\begin{lemma}\label{lem:ace1}
Let $\bp, \bq\in\mathcal P_n$, and let \(\pi_c(\bp,\bq)\) denote their comonotone coupling . Then, the glb and the lub of \(\bp\) and \(\bq\) in the majorization lattice are both canonically determined by \(\pi_c(\bp,\bq)\). More precisely, \(\bp\wedge \bq\) is an aggregation of the comonotone coupling \(\pi_c(\bp,\bq)\), while \(\bp\vee \bq\) is obtained by applying the least concave majorant (LCM) operation to an aggregation induced by \(\pi_c(\bp,\bq)\).
\end{lemma}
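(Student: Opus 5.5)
The plan is to work with the merged breakpoint set of the two Lorenz curves. Let $S=\{P_0,\ldots,P_n\}\cup\{Q_0,\ldots,Q_n\}\subseteq[0,1]$, sorted as $0=s_0<s_1<\cdots<s_m=1$. The nonzero masses of $\pi_c(\bp,\bq)$ are exactly the gap lengths $s_\ell-s_{\ell-1}$. This is because each cell $(i,j)$ with positive mass is the overlap $(P_{i-1},P_i]\cap(Q_{j-1},Q_j]$. These overlaps partition $(0,1]$ into the consecutive intervals $(s_{\ell-1},s_\ell]$.

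Every aggregation we use will be a partition of the cells of $\pi_c$ into groups of consecutive overlap intervals. Such an aggregation corresponds to choosing a sub-chain of $S$, namely cut points $0=t_0<t_1<\dots<t_n=1$ (with repetitions allowed, i.e. zero blocks). The aggregated masses are then $t_k-t_{k-1}$.

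\textbf{glb.} Set $t_k=\min\{P_k,Q_k\}\in S$. Since both $(P_k)$ and $(Q_k)$ are nondecreasing, the sequence $t_k$ is nondecreasing with $t_0=0$ and $t_n=1$. Grouping the overlap intervals lying in $(t_{k-1},t_k]$ into block $A_k$ gives an aggregation of $\pi_c(\bp,\bq)$ with masses $t_k-t_{k-1}$. By \eqref{eq:glb}, these are exactly the coordinates of $\bp\wedge\bq$.

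One point needs checking: each overlap interval $(s_{\ell-1},s_\ell]$ lies entirely inside a single $(t_{k-1},t_k]$. This holds because every $t_k$ belongs to $S$ and no element of $S$ lies strictly between $s_{\ell-1}$ and $s_\ell$. Hence the blocks are unions of cells and form a genuine partition. The blocks themselves are unions of cells; they need not be rectangles. The resulting vector lies in $\mathcal P_n$ because Cicalese--Vaccaro (as recalled in the Remark) show that the $\min$ of two concave Lorenz curves is concave. The only care needed is that $\pi_c$ lives on $[n]\times[n]$ while the aggregation has $n$ parts, some possibly empty; padding by zeros handles this.

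\textbf{lub.} Take $t_k=\max\{P_k,Q_k\}\in S$, which is again nondecreasing, and group the overlap intervals in the same way. This gives an aggregation of $\pi_c(\bp,\bq)$ equal to $\beta_{\bp,\bq}$ of \eqref{eq:lub}. By the Remark, $\bp\vee\bq$ is the least concave majorant of the Lorenz curve of $\beta_{\bp,\bq}$, i.e. the pool-adjacent-averaging procedure applied to $\beta_{\bp,\bq}$. This proves the claimed description.

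The main obstacle is making precise that ``$\{t_k\}\subseteq S$ monotone'' yields a valid partition of the support of $\pi_c$. I would handle it by the interval-containment argument above, with zero-mass cells assigned arbitrarily, e.g. to the block with the smallest index. This also explains the word ``canonically'': both blocks are determined solely by the chain $S$ of the comonotone coupling.
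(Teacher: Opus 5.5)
Your proof is correct and is essentially the paper's argument, phrased with interval breakpoints instead of random variables. The paper realizes $\pi_c(\bp,\bq)$ as the law of $(X,Y)$ driven by a common $U\sim\mathrm{Unif}([0,1])$ and computes CDFs to get $\max(X,Y)\sim\bp\wedge\bq$ and $\min(X,Y)\sim\beta_{\bp,\bq}$. Your blocks $(t_{k-1},t_k]$ with $t_k=\min\{P_k,Q_k\}$ (resp. $\max\{P_k,Q_k\}$) are exactly the events $\{\max(X,Y)=k\}$ (resp. $\{\min(X,Y)=k\}$), i.e. the aggregations $(i,j)\mapsto\max(i,j)$ and $(i,j)\mapsto\min(i,j)$. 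That formulation also assigns zero-mass cells canonically, which makes your interval-containment check unnecessary.
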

\begin{proof}
    The proof is deferred to Appendix~\ref{app:ace1}.
\end{proof}
\begin{lemma}\label{lem:ace2}
   Let $\bp, \bq\in\mathcal P_n$. Let $\pi_c(\bp,\bq)$ and $\pi_I(\bp,\bq)$ denote the comonotone coupling and the independent coupling of $\bp$ and $\bq$, respectively. Then, the comonotone coupling vector $\pi^{\downarrow}_c$ majorizes the independent coupling vector $\pi^{\downarrow}_I$ i.e., 
      \begin{align}
    \pi^{\downarrow}_I(\bp,\bq) \, \preceq \, \pi^{\downarrow}_c(\bp,\bq).
      \end{align}
\end{lemma}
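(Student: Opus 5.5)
The plan is to verify the partial-sum form of majorization. Both vectors have total mass $1$, so it suffices to show, for every $k\in[n^2]$, that the sum of the $k$ largest entries of $\pi_I(\bp,\bq)$ is at most the sum of the $k$ largest entries of $\pi_c(\bp,\bq)$. The sum of the $k$ largest entries of a vector is the maximum of its mass over all $k$-element index sets. So the task reduces to this: for the optimal $k$-set $S$ of $\pi_I$, exhibit some set of at most $k$ cells on which $\pi_c$ carries at least $\sum_{(i,j)\in S}p_iq_j$.

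\textbf{Step 1 (structure of the top-$k$ set of $\pi_I$).} Since $\bp,\bq\in\mathcal P_n$ are non-increasing, $p_iq_j$ is non-increasing in each coordinate. Hence the top-$k$ set $S$ can be chosen to be a down-set (a Young diagram) in $[n]\times[n]$. Let its row lengths be $\lambda_1\ge\lambda_2\ge\cdots\ge\lambda_r>0$, with $\sum_i\lambda_i=k$. Then
\begin{align*}
\sum_{(i,j)\in S}p_iq_j=\sum_{i=1}^{r}p_i\,Q_{\lambda_i}.
\end{align*}

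\textbf{Step 2 (structure of $\pi_c$).} The support of $\pi_c$ is a monotone north-west staircase path in which consecutive cells differ by a unit step in $i$ or $j$. For a path cell $(a,b)$, the mass of the path up to and including $(a,b)$ equals $\min\{P_a,Q_b\}$, because it is the length of $(0,\min\{P_a,Q_b\}]$. That initial segment has at most $a+b-1$ cells. Consequently, for every $a,b$ with $a+b-1\le k$, the top-$k$ sum of $\pi_c$ is at least $\min\{P_a,Q_b\}$. I would like to check that this holds for arbitrary such $(a,b)$, not only path cells; this should follow by truncating the path at the last cell with row $\le a$ and column $\le b$.

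\textbf{Step 3 (the key inequality, main obstacle).} It remains to show that for every Young diagram with $k$ cells,
\begin{align*}
\sum_{i=1}^{r}p_iQ_{\lambda_i}\le \max_{a+b-1\le k}\min\{P_a,Q_b\}.
\end{align*}
My first attempt would use the Durfee square and hook decomposition of $\lambda$. The diagram lies inside the union of the first $a$ rows and the first $b$ columns, where $a=\lambda_1'$ is the first column length and $b=\lambda_1$ is the first row length. However, $a+b-1$ equals $k$ only for hooks, so this alone is not enough.

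Instead I would argue by a direct estimate. Using $Q_{\lambda_i}\le 1$ and $p_i\le 1$, together with concavity of the Lorenz curves (Definition~\ref{def:lorenz}), the left side is at most a convex combination of quantities of the form $\min\{P_a,Q_b\}$ over pairs with $a+b-1\le k$.

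If that estimate fails, the fallback is induction on $n$. I would split off the largest atom $p_1$ and compare the two couplings on the residual PMFs. Comonotonicity makes $\pi_c$ recursive: its first cell has mass $\min\{p_1,q_1\}\ge p_1q_1$. I would then use that the residual of $\pi_I$ is dominated by an independent coupling of the residual marginals.

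I expect Step 3 to be the real difficulty. Steps 1 and 2 are routine structural facts. The extension to more than two marginals should then follow by iterating the same argument coordinatewise.
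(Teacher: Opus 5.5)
Your Steps 1 and 2 are correct and match the structure of the paper's proof: the top-$k$ set of $\pi_I$ is a Young diagram, and $\pi_c([a]\times[b])=\min\{P_a,Q_b\}$ is carried by at most $a+b-1$ cells of the monotone support path. The gap is Step 3, which you yourself identify as the crux and which is never proved. The ``direct estimate'' via a convex combination of terms $\min\{P_a,Q_b\}$ is only asserted, not derived. The inductive fallback is likewise only sketched; in particular, the claim that the residual of $\pi_I$ is dominated by an independent coupling of residual marginals is unsupported. As written, the argument does not reach the conclusion.

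The gap closes with the idea you discarded. You rejected the enclosing rectangle because $a+b-1=k$ only for hooks, but your Step 3 only needs some pair with $a+b-1\le k$, and the rectangle always provides one. With $r=\lambda_1'$ rows and $\lambda_1$ columns, the diagram contains its full first row and first column, i.e.\ $r+\lambda_1-1$ distinct cells, so $r+\lambda_1-1\le k$. Moreover, since $Q_{\lambda_i}\le Q_{\lambda_1}$ and $P_r,Q_{\lambda_1}\le 1$,
\[
\sum_{i=1}^{r}p_iQ_{\lambda_i}\le P_r\,Q_{\lambda_1}\le\min\{P_r,Q_{\lambda_1}\}.
\]
Applying Step 2 with $(a,b)=(r,\lambda_1)$ then finishes the proof. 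This is precisely the paper's argument; no Durfee-square decomposition, Lorenz-curve concavity, or induction is needed. A minor point in Step 2: the support path of $\pi_c$ can also step diagonally, $(i,j)\to(i+1,j+1)$, when a row and a column are exhausted simultaneously. This only reduces the cell count, so the bound $a+b-1$ still holds.
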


\begin{proof}
    The proof is deferred to Appendix~\ref{app:ace2}.
\end{proof}

\section{Subadditivity Property}

\begin{definition}
    A real-valued function $f:\mathcal{P} \rightarrow \mathbb{R}$ defined on a lattice $(\mathcal{P},\preceq,\wedge,\vee)$ is subadditive if $\forall$ $x$, $y$ $\in$ $\mathcal{P}$, we have
    \begin{align}
        f(x \wedge y) \leq f(x)+f(y).
    \end{align}
\end{definition}
In the next theorem, we state a result on the subadditivity of Rényi entropy on the majorization lattice.
\begin{theorem}\label{thm:subadd}
    For every $\alpha\in[0,\infty]$, the Rényi entropy $H_\alpha$ is subadditive
on the majorization lattice $(\mathcal P_n,\preceq,\wedge,\vee)$. That is, for
all $\bp, \bq\in\mathcal P_n$,
    \begin{align}\label{eq:subadd}
        H_{\alpha}(\bp \wedge \bq)\leq H_{\alpha}(\bp)+H_{\alpha}(\bq).
        \end{align}
The equality in~\eqref{eq:subadd} holds if and only if at least one of $p$ and $q$ is the
deterministic PMF $(1,0,\ldots,0)$.
\end{theorem}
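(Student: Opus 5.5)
The plan is to chain Lemma~\ref{lem:ace1}, Lemma~\ref{lem:ace2}, Schur-concavity, and the additivity of R\'enyi entropy under independence:
\begin{align*}
H_\alpha(\bp\wedge\bq)\;\le\; H_\alpha(\pi_c(\bp,\bq))\;\le\; H_\alpha(\pi_I(\bp,\bq)) \;=\; H_\alpha(\bp)+H_\alpha(\bq).
\end{align*}

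\textbf{First inequality.} By Lemma~\ref{lem:ace1}, $\bp\wedge\bq$ is an aggregation of $\pi_c(\bp,\bq)$. Concretely, the partial sums of $\bp\wedge\bq$ are $\min\{P_k,Q_k\}$, and the breakpoints $\{P_i\}\cup\{Q_j\}$ of the comonotone coupling refine the breakpoints $\{\min\{P_k,Q_k\}\}_k$. So each mass $(\bp\wedge\bq)(k)$ is a sum of cells of $\pi_c$. Aggregation implies majorization, so $\pi_c^{\downarrow}\preceq \bp\wedge\bq$ after zero-padding. Schur-concavity of $H_\alpha$ then gives $H_\alpha(\bp\wedge\bq)\le H_\alpha(\pi_c)$. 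The padding is harmless because $H_\alpha$ is insensitive to zeros; for $\alpha=0$ this uses the convention $0^0=0$, i.e.\ $H_0$ is the log of the support size.

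\textbf{Second inequality.} Lemma~\ref{lem:ace2} gives $\pi_I^{\downarrow}\preceq\pi_c^{\downarrow}$, and Schur-concavity again yields $H_\alpha(\pi_c)\le H_\alpha(\pi_I)$.

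\textbf{Final equality.} Since $\sum_{i,j}(p_iq_j)^\alpha=\big(\sum_i p_i^\alpha\big)\big(\sum_j q_j^\alpha\big)$, we get $H_\alpha(\pi_I)=H_\alpha(\bp)+H_\alpha(\bq)$ for $\alpha\neq 0,1,\infty$. The cases $\alpha=1$ and $\alpha=\infty$ follow by taking limits, or directly: for $\alpha=\infty$, $\max_{i,j} p_iq_j=p_1q_1$. For $\alpha=0$, the support of $\pi_I$ has size $|\mathrm{supp}\,\bp|\cdot|\mathrm{supp}\,\bq|$.

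\textbf{Equality case, sufficiency.} If $\bq=(1,0,\ldots,0)$, then $\bq$ is the top element of the lattice, so $\bp\wedge\bq=\bp$ and $H_\alpha(\bq)=0$, giving equality.

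\textbf{Equality case, necessity.} This is the main obstacle. Assume both $\bp$ and $\bq$ are non-deterministic, so $p_2>0$ and $q_2>0$; I want a strict inequality somewhere in the chain. The cleanest route is a support-size or maximum-mass argument rather than strict Schur-concavity in general.
\begin{itemize}
\item For $\alpha=\infty$: $H_\infty(\bp\wedge\bq)=-\log\min\{p_1,q_1\}$, while $H_\infty(\bp)+H_\infty(\bq)=-\log(p_1q_1)$. Since $p_1q_1<\min\{p_1,q_1\}$ when both $p_1,q_1<1$, the inequality is strict.
\item For $\alpha=0$: the support of $\bp\wedge\bq$ is at most $n_p+n_q-1<n_pn_q$ whenever $n_p,n_q\ge2$, so the inequality is strict.
\item For $\alpha\in(0,\infty)$: $H_\alpha$ is strictly Schur-concave, so it suffices to show $\pi_I^\downarrow\neq\pi_c^\downarrow$. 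This holds because $\pi_c$ has at most $n_p+n_q-1$ nonzero cells while $\pi_I$ has $n_pn_q>n_p+n_q-1$. Strict Schur-concavity of $\sum x_i^\alpha$ with $\alpha\neq1$, and of Shannon entropy, converts this into a strict inequality.
\end{itemize}
The one care point is making sure strictness survives the padding and the reordering $\downarrow$. Distinct multisets of masses give distinct sorted vectors, and strict Schur-concavity applies to distinct sorted vectors related by majorization, so the argument goes through.
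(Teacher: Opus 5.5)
Your proposal is correct and follows essentially the same route as the paper. It uses the chain $\pi_I^{\downarrow}\preceq\pi_c^{\downarrow}\preceq\bp\wedge\bq$ from Lemmas~\ref{lem:ace1} and~\ref{lem:ace2}, Schur-concavity, and additivity of $H_\alpha$ under the independent coupling; for the equality case it uses a support-size count ($\alpha=0$), the max-mass comparison $p_1q_1<\min\{p_1,q_1\}$ ($\alpha=\infty$), and strict Schur-concavity ($\alpha\in(0,\infty)$). The only minor difference is where you place the strictness: you compare $\pi_c$ (at most $a+b-1$ nonzero cells) with $\pi_I$ ($ab$ cells), whereas the paper compares $\bp\wedge\bq$ (support at most $\max(a,b)$) directly with $\pi_I^{\downarrow}$, and both versions are valid.
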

\begin{proof}
From Lemma~\ref{lem:ace1}, we know that $  \pi_c(\bp,\bq) \sqsubseteq \bp \wedge \bq$. Since aggregation implies majorization, we have that
\begin{align}
\pi^{\downarrow}_c(\bp,\bq) \, \preceq \, \bp \wedge \bq
\end{align}
Furthermore, using Lemma~\ref{lem:ace2} we obtain
\begin{align}
 \pi^{\downarrow}_I(\bp,\bq)\, \preceq \, \pi^{\downarrow}_c(\bp,\bq)  \, \preceq \bp \wedge \bq.
\end{align}
Since, Rényi entropy is Schur-concave for every $\alpha \in [0,\infty]$,
\begin{align}
    H_{\alpha}(\bp \wedge \bq) &\leq H_{\alpha}(\pi^{\downarrow}_I(\bp, \bq)) \label{eq:alk1}\\
    &= H_{\alpha}(\bp)\, +\,H_{\alpha}(\bq) \label{eq:alk2}
\end{align}
The above proof can also be obtained without invoking Lemma~\ref{lem:ace2}. The independent coupling $\pi_I(\bp,\bq)$ has marginals $\bp$
and $\bq$; hence, by aggregation,
\begin{align*}
    \pi_I^\downarrow(\bp,\bq)\, \preceq \,\bp
    \qquad \text{and} \qquad
    \pi_I^\downarrow(\bp,\bq)\, \preceq \,\bq,
\end{align*}
Thus $\pi_I^\downarrow(\bp,\bq)$
is a lower bound of $\bp$ and $\bq$. By the definition of the
greatest lower bound,
\begin{align}
    \pi_I^\downarrow(\bp,\bq) \,\preceq \,\bp \wedge \bq.
\end{align}
Therefore,~\eqref{eq:alk1} and~\eqref{eq:alk2} follow directly. Nevertheless, Lemma~\ref{lem:ace2} provides a sharper structural comparison between the independent and comonotone couplings, and we use it here to emphasize this relation.\\

\noindent \textbf{(Conditions for equality).} Now, it remains to precisely characterize the equality cases. First, if one of
$\bp$ and $\bq$ is deterministic, say
$\bp=(1,0,\ldots,0)$, then we have $\bp\wedge \bq=\bq$. Consequently,
\begin{align}
    H_\alpha(\bp\wedge \bq)
    =
    H_\alpha(\bq)
    =
    H_\alpha(\bp)+H_\alpha(\bq),
\end{align}
since $H_\alpha(\bp)=0$. The similar argument applies if
$\bq$ is deterministic.

Conversely, now assume that neither $\bp$ nor $\bq$ is
deterministic. Let us suppose
\begin{align}
    a := |\operatorname{supp}(\bp)|,
    \quad \text{and}\quad 
    b := |\operatorname{supp}(\bq)|.
\end{align}
such that $n \geq a,b \geq 2$. Thus, the independent coupling has support size
\begin{align}
    |\operatorname{supp}(\pi^{\downarrow}_I(\bp,\bq))| = ab,
\end{align}
whereas, by Lemma~\ref{lem:ace1},
\begin{align}
    |\operatorname{supp}(\bp \wedge \bq)|
    \leq \ \max(a,b).
\end{align}
Since $ab>\max(a,b)$, the zero-padded PMF
$\bp\wedge \bq$ cannot be a permutation of
$\pi^{\downarrow}_I(\bp,\bq)$.

For $\alpha \in (0,\infty)$, the Rényi entropy $H_\alpha$ is strictly
Schur-concave. Therefore, since
\begin{align}
    \pi_I^\downarrow(\bp,\bq)
    \preceq
    \bp\wedge \bq
\end{align}
and the two vectors are not permutations of each other, we obtain the
strict inequality
\begin{align}
    H_\alpha(\bp\wedge \bq)
    <
    H_\alpha(\pi_I^\downarrow(\bp,\bq))
    =
    H_\alpha(\bp)+H_\alpha(\bq).
\end{align}
Therefore, the equality is impossible for $0<\alpha<\infty$ unless either
$\bp$ or $\bq$ is deterministic. Now, let's analyze the edge cases of $\alpha \rightarrow 0$ and $\alpha \rightarrow \infty$.

For $\alpha \rightarrow 0$, the same support-size argument gives
\begin{align}
    H_0(\bp\wedge \bq)
    &=
    \log |\operatorname{supp}(\bp\wedge \bq)|\\
    &\leq
    \log \max(a,b)\\
    &<\log(ab)\\
    &= H_0(\bp)+H_0(\bq).
\end{align}

Finally, for $\alpha \rightarrow \infty$, let
\begin{align}
    p_1=\max_i p_i,
    \qquad
    q_1=\max_i q_i.
\end{align}
Since neither distribution is deterministic, $p_1<1$ and $q_1<1$. Moreover,
by the definition of $\bp\wedge \bq$,
\begin{align}
    (\bp\wedge \bq)(1)=\min(p_1,q_1).
\end{align}
Since $\bp \wedge \bq \in \mathcal{P}_n$ is arranged in non-increasing order,
\begin{align}
    \max_i(\bp\wedge \bq)(i)
    =
    \min(p_1,q_1)
    >
    p_1q_1.
\end{align}
Consequently,
\begin{align}
    H_\infty(\bp\wedge \bq)
    <
    -\log(p_1q_1)
    =
    H_\infty(\bp)+H_\infty(\bq).
\end{align}
Therefore, equality holds if and only if at least one of
$\bp$ and $\bq$ is deterministic. This completes our proof.
\end{proof}

Furthermore, the Schur-concavity of Rényi entropy yields the following simple upper bound,
\begin{corollary}\label{cor:schur}
    For all $\bp, \bq\in\mathcal P_n$ and $\alpha \in [0,\infty]$, we have
    \begin{align}
        H_\alpha(\bp)+H_\alpha(\bq)\le 2H_\alpha(\bp\wedge \bq).
    \end{align}
For $\alpha\in(0,\infty)$, the equality holds if and only if
$p=q$.
\end{corollary}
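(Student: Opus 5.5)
The plan is to use only that $\bp\wedge\bq$ is a lower bound of both $\bp$ and $\bq$, together with Schur-concavity. By the defining property of the glb we have $\bp\wedge\bq\preceq\bp$ and $\bp\wedge\bq\preceq\bq$. This can also be read off directly from \eqref{eq:glb}, since the partial sums of $\bp\wedge\bq$ are the pointwise minimum of those of $\bp$ and $\bq$. Since $H_\alpha$ is Schur-concave for every $\alpha\in[0,\infty]$ (Definition~\ref{def:schurconvex}), this gives $H_\alpha(\bp)\le H_\alpha(\bp\wedge\bq)$ and $H_\alpha(\bq)\le H_\alpha(\bp\wedge\bq)$. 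Adding the two inequalities yields $H_\alpha(\bp)+H_\alpha(\bq)\le 2H_\alpha(\bp\wedge\bq)$. The edge orders $\alpha=0$ and $\alpha=\infty$ are covered because $H_0$ (the log of the support size) and $H_\infty$ (minus the log of the largest mass) are also monotone under majorization. For $H_0$, if $\bp\preceq\bq$ then the support of $\bp$ is at least as large as that of $\bq$, since the partial sums of $\bp$ reach $1$ no earlier than those of $\bq$. For $H_\infty$, the first-coordinate inequality $p_1\le q_1$ suffices.

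For the equality case with $\alpha\in(0,\infty)$, the direction ``$\bp=\bq$ implies equality'' is immediate, since then $\bp\wedge\bq=\bp$. For the converse, equality in the sum forces equality in each of the two summed inequalities, so $H_\alpha(\bp)=H_\alpha(\bp\wedge\bq)=H_\alpha(\bq)$. Here I invoke strict Schur-concavity of $H_\alpha$ for $\alpha\in(0,\infty)$, as already used in the proof of Theorem~\ref{thm:subadd}. If $\bx\preceq\by$ with $\bx,\by\in\mathcal P_n$ both sorted and $\bx\ne\by$, then $H_\alpha(\bx)>H_\alpha(\by)$. Applied to $\bp\wedge\bq\preceq\bp$, equality forces $\bp\wedge\bq=\bp$, and similarly $\bp\wedge\bq=\bq$. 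Hence $\bp=\bq$, which establishes the characterization.

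The only delicate point is justifying strict Schur-concavity on sorted vectors. I would argue this via the sum-of-powers form. For $\alpha\ne1$, the map $\bx\mapsto\sum_i x_i^\alpha$ is strictly Schur-convex for $\alpha>1$ and strictly Schur-concave for $\alpha<1$, because $t\mapsto t^\alpha$ is strictly convex or strictly concave on $[0,1]$. Composing with the strictly monotone map $\frac{1}{1-\alpha}\log(\cdot)$ preserves strictness. For $\alpha=1$, the same reasoning applies to $-t\log t$. The strictness of $\sum\phi(x_i)$ for a strictly convex $\phi$ under a nontrivial majorization (non-permutation) is a standard consequence of Karamata's inequality, or of writing $\bx=D\by$ with $D$ doubly stochastic and not a permutation. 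Because both vectors are sorted, ``not a permutation of each other'' is the same as ``not equal'', which closes the argument. Note that for $\alpha\in\{0,\infty\}$ strictness fails, which is why the corollary restricts the equality claim to $(0,\infty)$.
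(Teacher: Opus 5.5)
Your proposal is correct and follows the same route as the paper. The paper gets the bound from $\bp\wedge\bq\preceq\bp$, $\bp\wedge\bq\preceq\bq$, Schur-concavity and summation, as spelled out in the proof of the second inequality of Corollary~\ref{cor:subadd}. It leaves the equality case implicit, and your strict Schur-concavity argument on sorted vectors (the property also invoked in the proof of Theorem~\ref{thm:subadd}) correctly supplies it.
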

As a consequence of Theorem~\ref{thm:subadd} and Corollary~\ref{cor:schur}, we have the following corollary which extends the above property to any finite number of PMFs,
\begin{corollary}\label{cor:subadd}
    Let $\{\bp_i\}^{m}_{i=1}$ be $m$ PMFs in $\mathcal{P}_n$. Then, for Rényi entropy of any order $\alpha \in [0, \infty]$, we have that
    \begin{align}
        H_{\alpha}\left(\bigwedge^{m}_{i=1}\bp_i\right) \leq \sum^{m}_{i=1}H_{\alpha}(\bp_i)\leq m H_{\alpha}\left(\bigwedge^{m}_{i=1}\bp_i\right).
    \end{align}
\end{corollary}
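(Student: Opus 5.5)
The plan is to argue by induction on $m$, handling the two inequalities separately. Write $\bw_m := \bigwedge_{i=1}^m \bp_i$.

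For the upper bound, $\sum_{i} H_\alpha(\bp_i) \le m H_\alpha(\bw_m)$, no induction is needed. By the definition of the greatest lower bound, $\bw_m \preceq \bp_i$ for every $i\in[m]$. The Rényi entropy is Schur-concave for every $\alpha\in[0,\infty]$, so this gives $H_\alpha(\bp_i)\le H_\alpha(\bw_m)$ for each $i$. Summing these $m$ inequalities yields the claim. This is the same reasoning that underlies Corollary~\ref{cor:schur}, applied to each $\bp_i$ instead of to a pair.

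For the lower bound, $H_\alpha(\bw_m)\le\sum_i H_\alpha(\bp_i)$, I would induct on $m$. The case $m=1$ is trivial and $m=2$ is Theorem~\ref{thm:subadd}. For the inductive step I need the associativity identity $\bw_{m} = \bw_{m-1}\wedge \bp_m$. This holds because, from \eqref{eq:glb}, the partial sums of the glb of a finite family are the pointwise minima of the partial sums. Indeed, the vector with partial sums $\min_i \sum_{j\le k}\bp_i(j)$ is a non-increasing PMF: the minimum of concave piecewise-linear Lorenz curves is concave, so its increments are non-increasing. This vector is the greatest lower bound of the family, and taking minima is associative. Applying Theorem~\ref{thm:subadd} to the pair $(\bw_{m-1},\bp_m)$ then gives
\begin{align*}
H_\alpha(\bw_m)\le H_\alpha(\bw_{m-1})+H_\alpha(\bp_m)\le \sum_{i=1}^{m}H_\alpha(\bp_i),
\end{align*}
where the second inequality uses the induction hypothesis.

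There is also a more direct alternative that avoids associativity. Take the $m$-fold independent coupling $\pi_I(\bp_1,\ldots,\bp_m)$. Each $\bp_i$ is an aggregation of it (the marginal), so $\pi_I^\downarrow\preceq\bp_i$ for all $i$, and hence $\pi_I^\downarrow\preceq\bw_m$. Schur-concavity together with additivity of Rényi entropy under independence then gives the bound in one line.

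The only step needing care is justifying that the $m$-ary glb is given by pointwise minima of partial sums (equivalently, associativity). The fact to check is the concavity of the minimum of Lorenz curves, which ensures the resulting vector lies in $\mathcal P_n$. If that verification is awkward, I would fall back on the independent-coupling argument, which requires only the defining property of $\bigwedge$.
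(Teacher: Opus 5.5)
Your proposal is correct and matches the paper's proof: the upper bound is exactly the paper's Schur-concavity argument summed over $i$, and the lower bound is the paper's $(m-1)$-fold application of Theorem~\ref{thm:subadd}, where you make explicit the identity $\bw_m=\bw_{m-1}\wedge\bp_m$ that the paper uses tacitly. This identity holds in any lattice directly from the definition of the glb ($\bw_{m-1}\wedge\bp_m$ is a common lower bound and dominates every other common lower bound), so the Lorenz-curve concavity check you flag is not needed, and your independent-coupling alternative mirrors the remark the paper makes for $m=2$ in the proof of Theorem~\ref{thm:subadd}.
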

\begin{proof}
The first inequality follows by applying Theorem~\ref{thm:subadd} (on the pair of PMFs) iteratively to the glb $\bigwedge_{i=1}^m \mathbf p_i$, for $m-1$ times. 

For the second inequality, since $\bigwedge_{i=1}^m \bp_i$ is a lower bound, w.r.t majorization, for every $\bp_i$, we
have $\bigwedge_{i=1}^m \mathbf p_i \,\preceq \,\bp_i$, for all $i$. By
Schur-concavity of Rényi entropy,
$H_\alpha(\mathbf p_i)\leq H_\alpha(\bigwedge_{i=1}^m \bp_i)$ for every
$i$, and summing over $i\in[m]$ gives us the result.
\end{proof}

It was shown in~\cite{bhatia} that Tsallis entropy is subadditive for order $\alpha \geq 1$. In the next theorem, we show that the subadditivity property extends to all $\alpha \in [0,\infty)$ using our proof technique based on couplings.

\begin{theorem}\label{thm:subadd_tsallis}
    For every $\alpha\in[0,\infty)$, the Tsallis entropy $T_\alpha$ is subadditive
on the majorization lattice $(\mathcal P_n,\preceq,\wedge,\vee)$. That is, for
all $\bp, \bq\in\mathcal P_n$,
    \begin{align}\label{eq:subadd_t}
        T_{\alpha}(\bp \wedge \bq)\leq T_{\alpha}(\bp)+T_{\alpha}(\bq).
        \end{align}
The equality in~\eqref{eq:subadd_t} holds if and only if at least one of $p$ and $q$ is the
deterministic PMF $(1,0,\ldots,0)$.
\end{theorem}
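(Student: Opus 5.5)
We argue exactly as in Theorem~\ref{thm:subadd}, then replace additivity by pseudo-additivity. As shown there (even without Lemma~\ref{lem:ace2}), the independent coupling $\pi_I^\downarrow(\bp,\bq)$ aggregates to both $\bp$ and $\bq$. It is therefore a lower bound of both and satisfies $\pi_I^\downarrow(\bp,\bq)\preceq \bp\wedge\bq$. Since $T_\alpha$ is Schur-concave for every $\alpha\ge 0$, we get
\begin{align*}
T_\alpha(\bp\wedge\bq)\le T_\alpha(\pi_I(\bp,\bq)) = T_\alpha(\bp)+T_\alpha(\bq)+(1-\alpha)T_\alpha(\bp)T_\alpha(\bq).
\end{align*}
For $\alpha\ge1$ the cross term is nonpositive, because $T_\alpha\ge0$. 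This already gives~\eqref{eq:subadd_t} and recovers~\cite{bhatia}; the case $\alpha=1$ is Shannon, where the cross term vanishes. For $\alpha\in[0,1)$ the cross term is nonnegative, so this route fails. This is the main obstacle, and it needs a sharper upper bound than the independent coupling.

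For $\alpha\in[0,1)$ we use the comonotone coupling and Lemma~\ref{lem:ace1}: $\pi_c(\bp,\bq)\sqsubseteq\bp\wedge\bq$, so $\pi_c^\downarrow\preceq\bp\wedge\bq$. The key fact we want is the stronger one that the glb $\bp\wedge\bq$, whose partial sums are $\min\{P_k,Q_k\}$, itself satisfies
\begin{align*}
\sum_k (\bp\wedge\bq)(k)^\alpha \;\ge\; \sum_i p_i^\alpha+\sum_j q_j^\alpha-1,
\end{align*}
which is equivalent to~\eqref{eq:subadd_t} after multiplying by $1/(1-\alpha)>0$. Write $\mathbf{r}=\bp\wedge\bq$ and $\phi(x)=x^\alpha$, which is concave with $\phi(0)=0$. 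We plan to prove the inequality by induction on $n$, or by a Lorenz-curve argument, as follows.

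The Lorenz curve $L_{\mathbf r}=\min\{L_\bp,L_\bq\}$ splits $[0,n]$ into maximal intervals on which it coincides with $L_\bp$ or with $L_\bq$. On each interval, the increments of $\mathbf r$ equal those of $\bp$ (or of $\bq$), except at crossing coordinates. At a crossing coordinate $k$, the mass $r_k$ lies between $p_k$ and $q_k$. Concretely, $r_k=\min(P_k,Q_k)-\min(P_{k-1},Q_{k-1})$, which is at least $\min(p_k,q_k)$.

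Hence $\sum_k r_k^\alpha\ge\sum_k \min(p_k,q_k)^\alpha$. It then suffices to show
\begin{align*}
\sum_k\min(p_k,q_k)^\alpha+1\ge\sum_k p_k^\alpha+\sum_k q_k^\alpha,
\end{align*}
or equivalently $\sum_k \max(p_k,q_k)^\alpha\le 1$ after using $\min^\alpha+\max^\alpha=p_k^\alpha+q_k^\alpha$. That last inequality is false in general; for example, it fails when $\bp=\bq$ is uniform. So the pointwise step must be refined.

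Instead, we pair each block of $\mathbf r$ with the matching block of $\bp$ or $\bq$ and charge the other distribution's entire contribution against the constant $1$. Suppose $\mathbf r$ follows $\bp$ on block $B$. There $\sum_{k\in B}r_k^\alpha\ge \sum_{k\in B}p_k^\alpha$, and we need the deficit $\sum_{k\in B}q_k^\alpha$ to be controlled.

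If this bookkeeping does not close, the fallback is a two-step reduction. First, the left side is Schur-concave in $\mathbf r$. Second, using the explicit structure from Lemma~\ref{lem:ace1}, we reduce to the extremal case where $\bp$ and $\bq$ cross once, and verify that case directly using concavity of $\phi$ and $\phi(x)+\phi(y)\ge\phi(x+y)$.

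For equality: if $\bp$ is deterministic, then $\bp\wedge\bq=\bq$ and $T_\alpha(\bp)=0$, so equality holds. Conversely, suppose neither is deterministic. For $\alpha\ge1$, the cross term $(1-\alpha)T_\alpha(\bp)T_\alpha(\bq)$ is strictly negative when $\alpha>1$, and the support argument of Theorem~\ref{thm:subadd} gives strict Schur-inequality at $\alpha=1$. For $\alpha<1$, we track where the chain of inequalities is strict, reusing the support-size count $ab>\max(a,b)$; for $\alpha=0$, this count gives $|\operatorname{supp}(\bp\wedge\bq)|-1<(a-1)+(b-1)$ directly.
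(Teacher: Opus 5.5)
\textbf{The case $\alpha\in(0,1)$ does not go through: your reduction there has the wrong sign.} Your argument for $\alpha\ge 1$ is correct: the independent coupling, Schur-concavity, and the cross term $(1-\alpha)T_\alpha(\bp)T_\alpha(\bq)\le 0$ give the inequality, with strictness for $\alpha>1$. Your support count for $\alpha=0$ is also correct. The remaining case $\alpha\in(0,1)$ is exactly what the theorem adds to~\cite{bhatia}.

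Write $S_\alpha(\mathbf r)=\sum_k r_k^\alpha$. Then $T_\alpha(\mathbf r)=(S_\alpha(\mathbf r)-1)/(1-\alpha)$ with $1-\alpha>0$, so \eqref{eq:subadd_t} is equivalent to
\[ S_\alpha(\bp\wedge\bq)\;\le\; S_\alpha(\bp)+S_\alpha(\bq)-1 . \]
Your ``key fact'' has $\ge$ instead. It is equivalent to superadditivity $T_\alpha(\bp\wedge\bq)\ge T_\alpha(\bp)+T_\alpha(\bq)$, and it is false: for $\bp=\bq=\bu_n$ with $n\ge 2$ it reads $n^{1-\alpha}\ge 2n^{1-\alpha}-1$.

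This explains why each later step breaks down:
\begin{itemize}
\item The bound $r_k\ge\min(p_k,q_k)$ only bounds $S_\alpha(\bp\wedge\bq)$ from below, which is the useless direction.
\item The resulting sufficient condition $\sum_k\max(p_k,q_k)^\alpha\le 1$ fails unless $\bp=\bq=(1,0,\ldots,0)$, because $x^\alpha\ge x$ on $[0,1]$.
\item Any block bookkeeping aimed at a false bound cannot close.
\item The fallback (reduce to a single crossing and verify) is not yet an argument.
\item Your strictness claim for $\alpha\in(0,1)$ refers to a chain of inequalities you never established.
\end{itemize}

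\textbf{What is missing is an upper bound on $S_\alpha(\bp\wedge\bq)$; the paper supplies it as follows.}

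\emph{Reduction to the comonotone coupling.} For $\alpha<1$, $S_\alpha$ is Schur-concave. Since $\pi_c^\downarrow(\bp,\bq)\preceq\bp\wedge\bq$ by Lemma~\ref{lem:ace1}, this gives $S_\alpha(\bp\wedge\bq)\le S_\alpha(\pi_c(\bp,\bq))$. It remains to show $S_\alpha(\pi_c)\le S_\alpha(\bp)+S_\alpha(\bq)-1$.

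\emph{Diminishing returns.}
\begin{itemize}
\item The masses of $\pi_c$ are the piece lengths of the common refinement of the partitions of $[0,1]$ cut at the points $P_i$ and at the points $Q_j$.
\item A cut splitting a piece of length $a+b$ into $a$ and $b$ raises $S_\alpha$ by $\delta(a,b)=a^\alpha+b^\alpha-(a+b)^\alpha\ge 0$.
\item Since $\partial_a\delta=\alpha\bigl(a^{\alpha-1}-(a+b)^{\alpha-1}\bigr)>0$, $\delta$ is increasing in each argument.
\item Insert the $Q$-cuts one at a time. The piece of the partition already containing all $P$-cuts that receives a new cut is contained in the corresponding piece of the partition with only the earlier $Q$-cuts. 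So each increment is no larger.
\item Telescoping gives $S_\alpha(\pi_c)-S_\alpha(\bp)\le S_\alpha(\bq)-S_\alpha((1))=S_\alpha(\bq)-1$.
\end{itemize}

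\emph{Equality.} If neither PMF is deterministic, the first $Q$-cut lands in a piece of the $\bp$-partition strictly shorter than $[0,1]$. Its increment is therefore strictly smaller, which gives strict inequality.

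Phrased with $h_\alpha$, this argument is uniform over $\alpha\in(0,\infty)\setminus\{1\}$. The paper therefore does not need your separate independent-coupling route for $\alpha\ge 1$, although that route is valid.
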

\begin{proof}
Unlike the Rényi entropy, the Tsallis entropy only satisfies pseudoadditivity i.e., given PMFs $\bp$ and $\bq$, we have that
\begin{align}
    T_{\alpha}(\pi^{\downarrow}_I(\bp,\bq)) = T_{\alpha}(\bp)+T_{\alpha}(\bq)+(1-\alpha)T_{\alpha}(\bp)T_{\alpha}(\bq)
\end{align}
for every $\alpha \in [0,\infty)$.  Thus, the relation between the independent coupling and comonotone coupling can be leveraged to proof subadditivity on the majorization lattice only for $\alpha \geq 1$. However, we bypass this relation to prove a stronger inequality showing that for all $\alpha \in [0,\infty)$, we have $T_{\alpha}(\pi^{\downarrow}_c(\bp,\bq)) \leq T_{\alpha}(\bp)+T_{\alpha}(\bq)$. The detailed proof is deferred to the Appendix~\ref{app:subadd_tsallis}.
\end{proof}

Similar to the case of Rényi entropy in Corollary~\ref{cor:subadd}, the extension of subadditivity to the finite collection of PMFs also holds for the Tsallis entropy of any order $\alpha \in [0,\infty)$.

\section{Supermodularity Property}

\begin{definition}
    A real-valued function $f:\mathcal{P} \rightarrow \mathbb{R}$ defined on a lattice $(\mathcal{P},\preceq,\wedge,\vee)$ is said to be supermodular if $\forall$ $x$, $y$ $\in$ $\mathcal{P}$, we have
    \begin{align}
        f(x)+f(y) \leq f(x \wedge y) + f(x \vee y).
    \end{align}
Equivalently, $f$ is submodular iff $-f$ is supermodular.
\end{definition}
In the next theorem, we state a result on the supermodularity of Rényi entropy on the majorization lattice,
\begin{theorem}\label{thm:supermod}
  For every $\alpha\in\{0\}\cup[1,\infty]$, the R\'enyi entropy $H_\alpha$ is
supermodular on the majorization lattice
$(\mathcal P_n,\preceq,\wedge,\vee)$. That is, for all
$\bp, \bq\in\mathcal P_n$,
    \begin{align}
        H_{\alpha}(\bp)+H_{\alpha}(\bq) \leq  H_{\alpha}(\bp \wedge \bq) + H_{\alpha}(\bp \vee \bq).
        \end{align}
Moreover, for $\alpha\in\{0,\infty\}$, equality holds for all $\bp, \bq\in\mathcal P_n$.\end{theorem}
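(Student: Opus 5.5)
We plan to handle the two endpoint cases directly and then treat $\alpha\in[1,\infty)$ through a Lorenz-curve (partial-sum) analysis of the glb and lub.

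\textbf{Case $\alpha=\infty$.} From \eqref{eq:glb}, $(\bp\wedge\bq)(1)=\min\{p_1,q_1\}$. From \eqref{eq:lub}, $\beta_{\bp,\bq}(1)=\max\{p_1,q_1\}$. The LCM operation never changes the first coordinate: $\beta(1)$ is the slope of the segment from the origin to $(1,\max\{p_1,q_1\})$, and since the true LCM of the points $(k,\max\{P_k,Q_k\})$ passes through that point, the first block is never averaged. Hence $(\bp\vee\bq)(1)=\max\{p_1,q_1\}$. Therefore
\[
H_\infty(\bp\wedge\bq)+H_\infty(\bp\vee\bq)=-\log\min\{p_1,q_1\}-\log\max\{p_1,q_1\}=H_\infty(\bp)+H_\infty(\bq).
\]

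\textbf{Case $\alpha=0$.} We have $H_0=\log|\mathrm{supp}|$. The support of $\bp\wedge\bq$ is the first $k$ at which $\min\{P_k,Q_k\}=1$, which is $\max\{a,b\}$. The support of $\bp\vee\bq$ is the first $k$ at which $\max\{P_k,Q_k\}=1$, which is $\min\{a,b\}$; the LCM does not enlarge the support, because the majorant reaches $1$ at the same index. Since $\{\max\{a,b\},\min\{a,b\}\}=\{a,b\}$, equality follows.

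\textbf{Case $\alpha\in[1,\infty)$.} We want $\phi(\bp)+\phi(\bq)\ge \phi(\bp\wedge\bq)+\phi(\bp\vee\bq)$ in a suitable form. For Shannon entropy and for $\alpha>1$, we would work with the Lorenz curves. Write the entropy as an integral of a convex function of the slopes: $\sum_i f(p_i)=\int_0^n f(L_\bp'(x))\,dx$, with $f(t)=t^\alpha$ or $f(t)=t\log t$. Pointwise, $L_{\bp\wedge\bq}=\min\{L_\bp,L_\bq\}$ and $L_\beta=\max\{L_\bp,L_\bq\}$. On each interval between crossings of $L_\bp$ and $L_\bq$, the min/max curves simply swap pieces of $\bp$ and $\bq$, so
\[
\sum f(\beta_i)+\sum f((\bp\wedge\bq)_i)=\sum f(p_i)+\sum f(q_i),
\]
except at cells where a crossing occurs inside a unit interval, where a convexity (two-point) argument gives an inequality. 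Passing from $\beta$ to $\bp\vee\bq$ via the LCM averages blocks, which can only decrease $\sum f$ by Jensen. These two facts combine into
\[
G(\bp\wedge\bq)+G(\bp\vee\bq)\le G(\bp)+G(\bq),\qquad G(\bp):=\sum_i f(p_i).
\]
For Shannon ($\alpha=1$) this is already the claim, since $H=-G$.

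For $\alpha>1$, set $S(\bp)=\sum p_i^\alpha$ and $H_\alpha=\frac{1}{1-\alpha}\log S$. Supermodularity of $H_\alpha$ is then equivalent to
\[
S(\bp\wedge\bq)\,S(\bp\vee\bq)\le S(\bp)\,S(\bq),
\]
a log-submodularity statement. We have the additive version $S_\wedge+S_\vee\le S_p+S_q$, together with the ordering $S_\wedge\le\min\{S_p,S_q\}\le\max\{S_p,S_q\}\le S_\vee$ from Schur-convexity. Write $S_\wedge=\min-\delta$ with $\delta\ge0$; the additive inequality gives $S_\vee\le\max+\delta$. Then
\[
S_\wedge S_\vee\le(\min-\delta)(\max+\delta)=\min\cdot\max-\delta(\max-\min)-\delta^2\le S_pS_q.
\]

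\textbf{Main obstacle.} The hard step is the crossing-cell analysis. When $L_\bp$ and $L_\bq$ cross strictly inside $[k-1,k]$, the glb and $\beta$ have the coordinates $(\min,\max)$ of the partial sums, not a swap. We would show that the pair $\big(\beta_k,(\bp\wedge\bq)_k\big)$ together with the neighboring coordinates is majorized appropriately, so that convexity of $f$ gives $f(\beta_k)+f((\bp\wedge\bq)_k)\le f(p_k)+f(q_k)$. This holds because $\{\beta_k,(\bp\wedge\bq)_k\}$ equals $\{P_k-Q_{k-1},\,Q_k-P_{k-1}\}$-type values. These are the increments of the crossing piecewise-linear curves, and they lie between $p_k$ and $q_k$ with the same sum $p_k+q_k$.
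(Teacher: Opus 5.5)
Your proof is correct and follows essentially the same route as the paper. Both arguments use the additive bound $S_\alpha(\bp\wedge\bq)+S_\alpha(\bp\vee\bq)\le S_\alpha(\bp)+S_\alpha(\bq)$, obtained from the min/max of the Lorenz curves plus Jensen on the LCM-pooled blocks, and then convert it to the product form via $S_\alpha(\bp\wedge\bq)\le\min\{S_\alpha(\bp),S_\alpha(\bq)\}$. They also both treat $\alpha\in\{0,\infty\}$ by direct support and top-coordinate computations. Your cell-wise observation that $(\bp\wedge\bq)_k+\beta_k=p_k+q_k$, with both values between $p_k$ and $q_k$, is a discrete restatement of the paper's Jensen step on $m$ and $M$. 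Handling $\alpha=1$ directly with $f(t)=t\log t$ cleanly replaces the paper's limit argument.
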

\begin{proof}
    We leverage the idea of Lorenz curves to represent the glb and lub, to prove our result. We define the following shorthands: $\bw:= \bp \wedge \bq$, $\bv:=\bp \vee \bq$, and $S_{\alpha}(\bp):=\sum^{n}_{i=1} p^{\alpha}_{i}$. Thus, we need to show that for $\alpha\geq 1$,
    \begin{align}
       \hspace*{-3mm}\frac{1}{1-\alpha} [\log(S_{\alpha}(\bp))+\log(S_{\alpha}(\bq))]&\leq \frac{1}{1-\alpha}[\log(S_{\alpha}(\bw))+\log(S_{\alpha}(\bv))] 
    \end{align}
i.e., $S_{\alpha}(\bp)S_{\alpha}(\bq)\geq S_{\alpha}(\bw)S_{\alpha}(\bv)$. From Definition~\ref{def:lorenz} of Lorenz curve, we note that for every $\bx \in \mathcal{P}_n$,
\begin{align}
S_{\alpha}(\bx) = \int^{n}_{0}L'_{\bx}(t)^{\alpha}dt 
\end{align}
Let $L_{\bp}$ and $L_{\bq}$ denote the Lorenz curves of $\bp$ and $\bq$, respectively. Then, we have that for all $k \in [n]$, the Lorenz curves of $\bw$ and $\bv$
\begin{align}
    L_{\bw}(k) &= \min\{L_{\bp}(k),L_{\bq}(k)\},\\
    L_{\bv}(k) &= \mathrm{LCM}(F_{\bv}(k)),
\end{align}
where $F_{\bv}(k):=\max\{L_{\bp}(k),L_{\bq}(k)\}$ and $\mathrm{LCM(\cdot)}$ denotes the least concave majorant operation. Further, let $m(t)$ and $M(t)$ be the continuous envelopes defined for all $t \in [0,n]$,
\begin{align}
        m(t) &= \min\{L_{\bp}(t),L_{\bq}(t)\},\\
    M(t) &= \max\{L_{\bp}(t),L_{\bq}(t)\}.
\end{align}
Within any interval of integer points or crossings between $L_\bp$ and $L_{\bq}$, we have that
\begin{align*}
    m &=L_{\bp},\;\; M=L_{\bq}, \;\text{ if }\; L_{\bp}\leq L_{\bq}\\
    m &=L_{\bq},\;\; M=L_{\bp}, \;\text{ if }\; L_{\bq}\leq L_{\bp}.
\end{align*}
Thus, within any continuous interval $(m'(t),M'(t))$ is a permutation of $(L'_{\bp}(t),L'_{\bq}(t))$. Consequently, for all $t \in [0,n]$,
\begin{align}
    (m'(t))^{\alpha}+ (M'(t))^{\alpha} = (L'_{\bp}(t))^{\alpha}+(L'_{\bq}(t))^{\alpha}.
\end{align}
On integrating both sides with respect to $t \in [0,n]$,
\begin{align}\label{eq:rem}
    \int^{n}_{0}(m'(t))^{\alpha}dt\;+ \int^{n}_{0}(M'(t))^{\alpha}dt\; = S_{\alpha}(\bp) +S_{\alpha}(\bq).
    \end{align}
Fix an interval $I_k \triangleq (k-1,k)$. Then, $m$ is a piece-wise linear concave function within $I_k$ while $L_{\bw}$ is a straight line joining $m(k-1)$ to $m(k)$. Thus, 
\begin{align*}
    L'_{\bw}(t)&=\frac{m(k)-m(k-1)}{1}\\&=\int^{k}_{k-1}m'(t)dt
\end{align*}
Since $u \rightarrow u^{\alpha}$ is a convex map for $\alpha \geq 1$, using Jensen's inequality we have
\begin{align}\label{eq:main1}
     (L'_{\bw}(t))^{\alpha} \leq \int^{k}_{k-1}m'(t)^{\alpha}dt
\end{align}
Now, considering $M$ in the interval $I_k$. Unlike $m$, $M$ is a piece-wise linear (not necessarily concave) function while $F_{\bv}$ is a straight line joining $M(k-1)$ to $M(k)$. If $I_k$ has no non-integer crossings, then $F'_{\bv}(t)=M'(t)$. 

However, if there are crossings then $M$ is convex and we need to concavify it to get $F_{\bv}$. Thus,
\begin{align*}
        F'_{\bv}(t)&=\frac{M(k)-M(k-1)}{1}\\&=\int^{k}_{k-1}M'(t)dt
\end{align*}
Since $u \rightarrow u^{\alpha}$ is a convex map for $\alpha>1$, using Jensen's inequality we have
\begin{align}\label{eq:main2}
     (F'_{\bv}(t))^{\alpha} \leq \int^{k}_{k-1}(M'(t))^{\alpha}dt
\end{align}
Now, if $F_{\bv}$ is a piecewise-linear concave function, then it is the lub $L_{\bv}$. Otherwise, it needs to be concavified and consequently the the lub $L_{\bv}$ is the least concave majorant (LCM) of $F_{\bv}$.

Therefore applying Jensen again over the pooled intervals used by the LCM operation, we get
\begin{align}\label{eq:main3}
S_\alpha(\bv)
    &=
    \int_0^n (L_{\bv}'(t))^\alpha\,dt\\
    &\le
    \int_0^n (F_{\mathbf v}'(t))^\alpha\,dt
    \le
    \int_0^n (M'(t))^\alpha\,dt .
\end{align}
On summing~\eqref{eq:main1} over all intervals $I_k$ for $k \in [n]$ and using~\eqref{eq:main3}, we obtain
\begin{align}
    S_{\alpha}(\bw)+S_{\alpha}(\bv) \leq \int^{n}_{0}(m'(t))^{\alpha}dt +\int^{n}_{0}(M'(t))^{\alpha}dt 
\end{align}
Therefore, from~\eqref{eq:rem} we have
\begin{align}\label{eq:rem2}
     S_{\alpha}(\bw)+S_{\alpha}(\bv) \leq S_{\alpha}(\bp)+S_{\alpha}(\bq).
\end{align}
Recall that $\bw \preceq \bp,\;\bq \preceq \bv$. Since $S_{\alpha}$ is schur-convex for $\alpha \geq 1$, it implies that $S_{\alpha}(\bw)\leq S_{\alpha}(\bp),S_{\alpha}(\bq) \leq S_{\alpha}(\bv)$. Let $x,\;y\;,z\;\geq0$ such that 
\begin{align*}
    S_{\alpha}(\bp)=S_{\alpha}(\bw)+x\\
        S_{\alpha}(\bq)=S_{\alpha}(\bw)+y\\
    S_{\alpha}(\bv)=S_{\alpha}(\bw)+z
\end{align*}
Then, from~\eqref{eq:rem2} we have
\begin{align}\label{eq:rem3}
x+y \geq z
\end{align}
Now, 
\begin{align}
    S_{\alpha}(\bp)S_{\alpha}(\bq)&-S_{\alpha}(\bw)S_{\alpha}(\bv)\notag \\ &\hspace*{-15mm}= (S_{\alpha}(\bw)+x)(S_{\alpha}(\bw)+y)-S_{\alpha}(\bw)(S_{\alpha}(\bw)+z)\\
    &\hspace*{-15mm}=S_{\alpha}(\bw)(x+y-z) +xy \, \,
 \geq\,\, 0.  \label{eq:asdfgh}
\end{align}
where~\eqref{eq:asdfgh} follows from~\eqref{eq:rem3} and noting that $x,\;y\;\geq 0$. The case $\alpha\rightarrow1$ follows by taking the limit $\alpha\downarrow 1$, since
Rényi entropy is continuous in $\alpha$ on finite alphabets. 
\smallbreak
\noindent \textbf{Modularity for $\alpha \rightarrow 0$ and $\alpha\rightarrow\infty$.} Let
$ a := |\operatorname{supp}(\bp)|$ and $b := |\operatorname{supp}(\bq)|$,
where $a,b\, \leq \, n$. For $\alpha\rightarrow0$, we have
\begin{align*}
    |\operatorname{supp}(\bp\wedge\bq)|=\max(a,b),
    \quad
    |\operatorname{supp}(\bp\vee\bq)|=\min(a,b).
\end{align*}
Therefore,
\begin{align}
    H_0(\bp\wedge\bq)
    +
    H_0(\bp\vee\bq)
    &=
    \log\max(a,b)
    +
    \log\min(a,b)\\
    &=
    \log a+\log b\\
    &=
    H_0(\bp)+H_0(\bq).
\end{align}
\noindent Now consider $\alpha\rightarrow\infty$. Since $\bp$ and $\bq$ are arranged in non-increasing order, thus
\begin{align}
    p_1=\max_i p_i,
    \qquad
    q_1=\max_i q_i.
\end{align}
Then
\begin{align*}
    (\bp\wedge\bq)(1)=\min(p_1,q_1),
    \quad
    (\bp\vee\bq)(1)=\max(p_1,q_1).
\end{align*}
Consequently, we have that
\begin{align}
\begin{aligned}
    H_\infty(\bp\wedge\bq)
    +
    H_\infty(\bp\vee\bq)
    &=
    -\log\min(p_1,q_1)
    -
    \log\max(p_1,q_1) \\
    &=
    -\log p_1-\log q_1 \\
    &=
    H_\infty(\bp)+H_\infty(\bq).
\end{aligned}
\end{align}
Thus, the Rényi entropy is modular for $\alpha \in \{0, \infty\}$. This completes our proof.
\end{proof}

\begin{remark}
Unlike Corollary~\ref{cor:subadd}, the supermodularity property does not extend in the same way to more than two PMFs. In particular,
for $\bp_1, \bp_2, \bp_3\in\mathcal P_n$, one `does not' generally have
\begin{align}
\sum_{i=1}^3 H_\alpha(\bp_i)
\le
H_\alpha\left(\bigwedge^{3}_{i=1} \bp_i\right)
+
H_\alpha\left(\bigvee^{3}_{i=1} \bp_i\right).\notag
\end{align}
For example, on taking $\bp_1=\bp_2=\bp_3=\bp$ for any non-deterministic PMF $\bp$, the left-hand
side equals $3H_\alpha(\bp)$, whereas the right-hand side equals $2H_\alpha(\bp)$.
Thus the inequality fails whenever $H_\alpha(\bp)>0$.
\end{remark}

\begin{theorem}\label{thm:countersupermod}
For every $\alpha\in(0,1)$, the R\'enyi entropy $H_\alpha$ is neither supermodular nor submodular on the majorization lattice $(\mathcal P_n,\preceq,\wedge,\vee)$.
\end{theorem}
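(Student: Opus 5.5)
The plan is to prove the statement with explicit counterexamples. Fix $\alpha\in(0,1)$ and write $S_\alpha(\bx)=\sum_i x_i^\alpha$, $\bw=\bp\wedge\bq$, $\bv=\bp\vee\bq$, as in the proof of Theorem~\ref{thm:supermod}. Since $1/(1-\alpha)>0$ for $\alpha\in(0,1)$, the supermodular inequality for the pair $(\bp,\bq)$ is equivalent to
\begin{align*}
\Delta_\alpha(\bp,\bq)\triangleq S_\alpha(\bp)S_\alpha(\bq)-S_\alpha(\bw)S_\alpha(\bv)\le 0 .
\end{align*}
So it suffices to exhibit, for each such $\alpha$, one incomparable pair with $\Delta_\alpha>0$ (supermodularity fails) and another with $\Delta_\alpha<0$ (submodularity fails). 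Comparable pairs give $\{\bw,\bv\}=\{\bp,\bq\}$ and hence $\Delta_\alpha=0$, so the pairs must be incomparable. For each pair, $\bw$ is read off from~\eqref{eq:glb} and $\bv$ from~\eqref{eq:lub}, followed by the LCM step when it is needed.

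\textbf{Pair violating supermodularity.} Take $n=k+1$, $\bq=(1/2,1/2,0,\ldots,0)$ and $\bp=(1-k\varepsilon,\varepsilon,\ldots,\varepsilon)$ with $k$ tail masses and $\varepsilon$ small. These are incomparable since $p_1>q_1$ but $P_2<1=Q_2$. Taking minima and maxima of the partial sums gives
\begin{align*}
\bw=(1/2,\,1/2-(k-1)\varepsilon,\,\varepsilon,\ldots,\varepsilon),\qquad \bv=(1-k\varepsilon,\,k\varepsilon,\,0,\ldots,0),
\end{align*}
where $\bw$ has $k-1$ tail masses equal to $\varepsilon$. Both vectors are already non-increasing, so no LCM step is needed. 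Because $\alpha<1$, the terms $\varepsilon^\alpha$ dominate the $O(\varepsilon)$ corrections, and
\begin{align*}
\Delta_\alpha=\varepsilon^\alpha\bigl[2^{1-\alpha}k-(k-1)-2^{1-\alpha}k^\alpha\bigr]+o(\varepsilon^\alpha).
\end{align*}
Since $2^{1-\alpha}>1$ and $k^\alpha=o(k)$, the bracket is positive for $k$ large enough, depending on $\alpha$. For such $k$ and small enough $\varepsilon$ we get $\Delta_\alpha>0$, that is, $H_\alpha(\bp)+H_\alpha(\bq)>H_\alpha(\bw)+H_\alpha(\bv)$.

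\textbf{Pair violating submodularity.} Here I need $\Delta_\alpha<0$. In the family above with $k=1$ the leading coefficient vanishes, which shows that the $\varepsilon^\alpha$ bookkeeping is delicate. I would try two variants.
\begin{enumerate}[(i)]
\item Replace $\bq=(1/2,1/2)$ by a block $\bq=(1/m,\ldots,1/m)$ and put the small masses into $\bp$ in a pattern that makes $\bv$, rather than $\bw$, acquire many small coordinates. This would flip the sign of the leading coefficient, which should then be of the form $c_1-c_2k+\cdots$.
\item Use a configuration in which $\beta_{\bp,\bq}$ is not ordered, so that the LCM averaging spreads the mass of $\bv$ and raises $S_\alpha(\bv)$.
\end{enumerate}
In either case I would expand $\Delta_\alpha$ to leading order in $\varepsilon$ and then choose the integer parameters, as functions of $\alpha$, so that the coefficient is negative.

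As a fallback I would use a small fixed example such as $n=3$, $\bp=(1/2,1/2,0)$, $\bq=(q_1,q_2,q_3)$ with $q_1>1/2$ and $q_1+q_2<1$. For this pair $\bw=(1/2,\,q_1+q_2-1/2,\,q_3)$ and $\bv=(q_1,1-q_1,0)$. I would treat $\Delta_\alpha$ as an explicit function of $(q_1,q_2)$ and look for the sign region, possibly in the limit $q_3\to 0$.

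\textbf{Main obstacle.} The main obstacle is uniformity in $\alpha$. A single fixed pair typically changes sign at some threshold in $\alpha$; for instance, the $k=2$ instance of the first family flips at $\alpha=\log_2(4/3)$. The counterexamples therefore have to be $\alpha$-dependent families, and the second family in particular must be tuned so that its leading coefficient is negative for every $\alpha\in(0,1)$. This is the step I expect to need the most work.
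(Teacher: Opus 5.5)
Your reduction to the sign of $\Delta_\alpha=S_\alpha(\bp)S_\alpha(\bq)-S_\alpha(\bw)S_\alpha(\bv)$ is correct. Your first family is a valid proof that $H_\alpha$ is not supermodular: for $k\ge 2$ the pair is incomparable, $\bw$ and $\bv$ are as you state, and the bracket $(2^{1-\alpha}-1)k+1-2^{1-\alpha}k^{\alpha}$ is positive for large $k$. This is more robust than the paper's argument, which numerically evaluates two fixed pairs on a grid of $\alpha$. By the paper's own table, its second example changes sign between $\alpha=0.9$ and $\alpha=1$, so it cannot witness non-supermodularity for $\alpha$ close to $1$. Your $\alpha$-dependent family does, at the price of $n=k+1$ growing without bound as $\alpha\to 1$.

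The genuine gap is the other half. You never exhibit a pair with $\Delta_\alpha<0$, so non-submodularity is not proved, and the one concrete regime you propose gives the wrong sign. In your fallback with $\bp=(1/2,1/2,0)$ and $\bq=(q_1,1-q_1-\delta,\delta)$, one gets $\bw=(1/2,1/2-\delta,\delta)$, $\bv=(q_1,1-q_1,0)$, and
\[
\Delta_\alpha=\delta^\alpha\bigl(2^{1-\alpha}-q_1^\alpha-(1-q_1)^\alpha\bigr)+O(\delta)>0
\]
by strict concavity of $u\mapsto u^\alpha$. That is again a supermodularity violation. Incidentally, it already occurs in $\mathcal P_3$ for every $\alpha\in(0,1)$, which removes the need for $n\to\infty$ in your first family.

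The sign you need appears in the opposite regime. Take $q_1=1/2+\eta$ and $q_2=q_3=(1/2-\eta)/2$. Then $\bw=(1/2,q_2+\eta,q_3)$, $\bv=(1/2+\eta,1/2-\eta,0)$, $S_\alpha(\bv)=2^{1-\alpha}-O(\eta^2)$, and
\[
\Delta_\alpha=\alpha\,2^{1-\alpha}\bigl(2^{1-\alpha}-q_2^{\alpha-1}\bigr)\eta+O(\eta^2).
\]
This is negative for small $\eta>0$, because $q_2<1/2$ and $\alpha<1$.

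Your concern about uniformity in $\alpha$ is misplaced for this half. A per-$\alpha$ choice suffices. Near $\alpha=1$ this is also the easy direction: any incomparable pair on which Shannon entropy is strictly supermodular remains a witness by continuity in $\alpha$. The delicate direction near $\alpha=1$ is the one your first family already handles.
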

\begin{proof}
    The proof is deferred to the Appendix~\ref{app:countersupermod}.
\end{proof}
In the next theorem, we state a result on the supermodularity of Tsallis entropy on the majorization lattice,
\begin{theorem}\label{thm:supermod_tsallis}
  For every $\alpha\in [0,\infty)$, the Tsallis entropy $T_\alpha$ is
supermodular on the majorization lattice
$(\mathcal P_n,\preceq,\wedge,\vee)$. That is, for all
$\bp, \bq\in\mathcal P_n$,
    \begin{align}
        T_{\alpha}(\bp)+T_{\alpha}(\bq) \leq  T_{\alpha}(\bp \wedge \bq) + T_{\alpha}(\bp \vee \bq).
        \end{align}
Moreover, for $\alpha=0$, equality holds for all $\bp, \bq\in\mathcal P_n$.\end{theorem}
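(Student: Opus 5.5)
The plan is to reduce supermodularity of $T_\alpha$ to a statement about the power sums $S_\alpha(\bx):=\sum_i x_i^\alpha$. Since $T_\alpha(\bx)=(1-S_\alpha(\bx))/(\alpha-1)$ is affine in $S_\alpha$, the inequality $T_\alpha(\bp)+T_\alpha(\bq)\le T_\alpha(\bw)+T_\alpha(\bv)$ (with $\bw=\bp\wedge\bq$, $\bv=\bp\vee\bq$) becomes
\begin{align*}
S_\alpha(\bw)+S_\alpha(\bv)\le S_\alpha(\bp)+S_\alpha(\bq)\quad(\alpha>1),\qquad
S_\alpha(\bw)+S_\alpha(\bv)\ge S_\alpha(\bp)+S_\alpha(\bq)\quad(\alpha<1).
\end{align*}
So, unlike the R\'enyi case, no multiplicative step like \eqref{eq:asdfgh} is needed. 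This is exactly why Tsallis should be supermodular for all $\alpha$ while R\'enyi fails on $(0,1)$.

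For $\alpha>1$, the required additive inequality is precisely \eqref{eq:rem2}, already established in the proof of Theorem~\ref{thm:supermod}. I would simply invoke it. For $\alpha\in(0,1)$, I would rerun that Lorenz-curve argument with the convex map $u\mapsto u^\alpha$ replaced by the concave one:
\begin{itemize}
\item The envelope identity \eqref{eq:rem} is unchanged, since pointwise $(m',M')$ is a permutation of $(L'_\bp,L'_\bq)$.
\item On each $I_k$, $L'_\bw$ is the average of $m'$, so Jensen with a concave function reverses \eqref{eq:main1}.
\item $F'_\bv$ is the average of $M'$ on $I_k$, which reverses \eqref{eq:main2}.
\item The LCM step replaces slopes of $F_\bv$ by their averages over pooled intervals, which reverses \eqref{eq:main3}.
\end{itemize}
Summing gives $S_\alpha(\bw)+S_\alpha(\bv)\ge\int (m')^\alpha+\int(M')^\alpha=S_\alpha(\bp)+S_\alpha(\bq)$, as required. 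The case $\alpha=1$ (Shannon) follows by continuity in $\alpha$, or from Theorem~\ref{thm:supermod} directly.

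For $\alpha=0$ I would use the convention $0^0=0$, so $S_0$ counts the support size. Then $T_0(\bx)=|\operatorname{supp}(\bx)|-1$, and the support-size facts from the modularity part of Theorem~\ref{thm:supermod} apply: $|\operatorname{supp}(\bw)|=\max(a,b)$ and $|\operatorname{supp}(\bv)|=\min(a,b)$. This gives $T_0(\bw)+T_0(\bv)=a+b-2=T_0(\bp)+T_0(\bq)$, hence equality. The case $\alpha\downarrow0$ is also covered by continuity on finite alphabets.

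The main obstacle I expect is checking that each Jensen step is a genuine averaging of slopes, so that it reverses cleanly under concavity. The $m$ step is straightforward. The $M$ and LCM steps need care: $L_\bv$ must be obtained from $M$ through successive averaging operations on interval partitions. The first is averaging over each $I_k$; the second is averaging over the pooled blocks that the LCM merges. Because the LCM of a piecewise-linear function on integer nodes is linear on blocks whose endpoints are nodes, the slope of $L_\bv$ on each block equals the mean slope of $F_\bv$, and hence of $M$, there. Once this averaging structure is made explicit, both convex and concave Jensen apply symmetrically. Everything else is bookkeeping.
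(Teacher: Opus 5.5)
Your proposal is correct and follows the paper's proof essentially step for step. You use the affine reduction of $T_\alpha$ to the additive power-sum inequality, reuse \eqref{eq:rem2} for $\alpha>1$, rerun the same Lorenz-curve/Jensen argument with the concave map $u\mapsto u^\alpha$ for $\alpha\in(0,1)$, get $\alpha=1$ by continuity, and compute support sizes, $\max(a,b)+\min(a,b)-2=a+b-2$, for $\alpha=0$. Your observation that $L_\bv$ is affine between integer nodes where it touches $F_\bv$, so its slopes are block averages of $M'$, usefully makes precise the paper's somewhat informal ``pooled intervals'' step.
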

\begin{proof}
    The proof is based on the idea of Lorenz curves, similar to the one in Theorem~\ref{thm:supermod}. We defer the details to the Appendix~\ref{app:supermod_tsallis}.
\end{proof}
\begin{remark}\label{rem:bsk}
It is stated in~\cite{bhatia} (Theorem~$3.2$) that the Tsallis entropy is submodular on the majorization lattice, for every order $\alpha \in [0,\infty)$. This contradicts with our Theorem~\ref{thm:supermod_tsallis}. We note that the
discrepancy is one of the signs in the proof in~\cite{bhatia}. The argument therein controls the power sum
$S_\alpha(\bp) := \sum_{i=1}^{n} p_i^{\alpha}$, which is submodular on the lattice for $\alpha > 1$ and supermodular for $\alpha < 1$ (by convexity, resp.\ concavity, of $u \mapsto u^{\alpha}$). The Tsallis entropy is the affine transform of $S_{\alpha}(\cdot)$ i.e., $T_\alpha = \frac{1 - S_\alpha}{\alpha - 1}$, whose slope $\frac{-1}{\alpha-1}$ changes sign at $\alpha = 1$. This sign change exactly compensates the change in the modularity direction of $S_\alpha$. Consequently, $T_\alpha$ is supermodular for every $\alpha \in [0, \infty)$ --- rather than submodular.
\smallbreak
As an explicit example, consider $\bp = (0.6, 0.2, 0.2)$ and
$\bq = (0.45, 0.4, 0.15)$, such that
$\bp \wedge \bq = (0.45, 0.35, 0.2)$ and
$\bp \vee \bq = (0.6, 0.25, 0.15)$. On both sides of $\alpha=1$, we have that
\begin{align*}
  T_2(\bp \wedge \bq) + T_2(\bp \vee \bq)
    &= 1.190 > 1.175
     = T_2(\mathbf{p}) + T_2(\mathbf{q}), \\
  T_{1/2}(\bp \wedge \bq) + T_{1/2}(\bp \vee \bq)
    &= 2.743 > 2.719
     = T_{1/2}(\mathbf{p}) + T_{1/2}(\mathbf{q}),
\end{align*}
each violating submodularity as stated in~\cite{bhatia} and therefore is consistent with our
Theorem~\ref{thm:supermod_tsallis}.
\end{remark}

\section{Applications to Econometrics}

Let $\bx,\by\in\mathcal P_n$. Inspired by the work in~\cite{cica-eco}, we define
\begin{align}
    d_\alpha(\bx,\by)\triangleq    H_\alpha(\bx)+H_\alpha(\by)-    2H_\alpha(\bx\vee\by),
\end{align}
where $H_\alpha(\cdot)$ denotes the Rényi entropy of order $\alpha$. Using
Theorem~\ref{thm:subadd} (subadditivity) and Theorem~\ref{thm:supermod} (supermodularity), we can show that $d_\alpha(\cdot,\cdot)$ is a metric on $\mathcal P_n$ for
$\alpha\geq 1$. In particular, for $\alpha \rightarrow1$, this recovers the
Shannon-entropy based distance
\begin{align}
    d(\bx,\by) \triangleq H(\bx)+H(\by)-2H(\bx\vee\by),
\end{align}
studied in~\cite{cica-eco}.

This metric family also admits a natural interpretation as a measure of inequality. Let
$\bu_n=(1/n,\ldots,1/n)$ denote the uniform PMF. Since
the lub $\bx\vee\bu_n=\bx$, we have
\begin{align}
    d_1(\bx,\bu_n) = \log n-H(\bx),
\end{align}
which is precisely the Theil index~\cite{theil} of the population $\bx$. More
generally,
\begin{align}
    d_\alpha(\bx,\bu_n) = \log n-H_\alpha(\bx)
\end{align}
provides a Rényi-parametrized analogue of the Theil index, measuring the deviation of $\bx$ from the egalitarian population. The parameter $\alpha \in [1,\infty)$ allows this deviation from uniformity to be assessed at different levels of sensitivity. In particular, larger values of $\alpha$ place greater
emphasis on the largest probabilities, so the resulting distance would penalize concentration in dominant components more sharply than the Shannon-case. This suggests that the choice of $\alpha$ may be useful in applications where different notions of inequality or concentration are of
interest. We believe that a systematic study of these distances, their statistical
properties, and empirical behavior would be an interesting direction for future work.

\section{Conclusion}
In this work, we studied subadditivity and supermodularity properties of
Rényi entropy and the Tsallis entropy on the majorization lattice. Although, both entropy families are closely related, our results show that they exhibit quite different properties on the majorization lattice. As part of our analysis, we established a fundamental result in coupling theory showing that for any collection of marginal PMFs, the comonotone coupling majorizes the independent coupling. Leveraging Schur-concavity, this provides a simple and structural route to Rényi and Tsallis entropy inequalities on the majorization lattice.
\smallbreak
Beyond these lattice-theoretic results, we introduced an information-theoretic distance between PMFs which serves as a Rényi-parameterized analogue of the Theil index. More broadly, since majorization and Rényi entropies arise naturally in econometrics, source coding, guessing problems, and quantum information tasks such as entanglement conversion, we believe that our results may have broader applications in economics, information theory, and quantum information. Finally, since Tsallis entropy arise in nonextensive thermodynamics and statistical physics, the contrast between the Rényi and Tsallis entropy cases observed here may help clarify how different entropy families capture dependence, aggregation, and disorder in applied settings.

\section*{Acknowledgment}
This work is supported by the Swiss National Science Foundation (SNSF) grant number $211337$.



\newpage

\bibliographystyle{alpha}
\bibliography{IEEEabrv,ref}
\clearpage
\newpage
\appendix
\subsection{Proof of Lemma~\ref{lem:ace1}}\label{app:ace1}
\begin{proof}
Given $\bp,\,\bq\, \in \mathcal{P}_n$. Let $U$ be a uniformly distributed continuous random variable on $[0,1]$ i.e., $U \sim \mathrm{Unif}([0,1])$. Now, define random variables $X$ and $Y$ on the support set $\{1,2,\dots,n\}$ as
 \begin{align*}  
    X &\triangleq \min\left\{i:\sum_{j=1}^{i}p_j \geq U\right\}  \\   Y &\triangleq \min\left\{i:\sum_{j=1}^{i}q_j \geq U\right\}
\end{align*}
Thus, we have that $X \sim \bp$, $Y \sim \bq$ and they are coupled through the random variable $U$. Thus, the joint distribution of $(X,Y)$ is the comonotone coupling of $\bp$ and $\bq$. 

Now, let us define $Z \triangleq \max(X,Y)$ and $W \triangleq \min(X,Y)$. We will show that $Z \sim \bp \wedge \bq$ and that the lub $\bp \vee \bq$ can be obtained by the application of LCM operation on the PMF of $W$. We have,
\begin{align}
    \mathbb{P}(Z \leq i) &= \mathbb{P}(X \leq i, Y \leq i)\\
    &=\mathbb{P}\left(U \leq \sum^{i}_{j=1}p_j,U \leq \sum^{i}_{j=1}q_j\right)\\
 &=\mathbb{P}\left(U \leq \min\left\{\sum^{i}_{j=1}p_j, \sum^{i}_{j=1}q_j\right\}\right)\\    
 &=\min\left\{\sum^{i}_{j=1}p_j, \sum^{i}_{j=1}q_j\right\}
\end{align}
Therefore, 
\begin{align}
    \mathbb{P}(Z = i)    
 &=\min\left\{\sum^{i}_{j=1}p_j, \sum^{i}_{j=1}q_j\right\}-\min\left\{\sum^{i-1}_{j=1}p_j, \sum^{i-1}_{j=1}q_j\right\}
\end{align}
From~\eqref{eq:glb}, we have that $Z \sim \bp \wedge \bq $. Consequently, for all $k \in [n]$,
\begin{align}
\bp \wedge \bq \,(k) = \sum_{(i,j):\, \max(i,j)=k} \pi_c(\bp,\bq)(i,j)
\end{align}
Similarly, we have that
\begin{align}
    \mathbb{P}(W \leq i) &= \mathbb{P}(X \leq i \,\, \mathrm{or } \,\,Y \leq i)\\
    &=\mathbb{P}\left(U \leq \sum^{i}_{j=1}p_j\,\, \mathrm{or}\, \,U \leq \sum^{i}_{j=1}q_j\right)\\
 &=\mathbb{P}\left(U \leq \max\left\{\sum^{i}_{j=1}p_j, \sum^{i}_{j=1}q_j\right\}\right)\\    
 &=\max\left\{\sum^{i}_{j=1}p_j, \sum^{i}_{j=1}q_j\right\}
\end{align}
Therefore, 
\begin{align}
    \mathbb{P}(W = i)    
 &=\max\left\{\sum^{i}_{j=1}p_j, \sum^{i}_{j=1}q_j\right\}-\max\left\{\sum^{i-1}_{j=1}p_j, \sum^{i-1}_{j=1}q_j\right\}
\end{align}
From~\eqref{eq:lub}, we have that $W \sim \beta_{\bp,\bq}$. Consequently, for all $k \in [n]$,
\begin{align}
\beta_{\bp,\bq}\,(k) = \sum_{(i,j):\, \min(i,j)=k} \pi_c(\bp,\bq)(i,j)
\end{align}
The lub $\bp \vee \bq$ can be obtained by applying the least concave majorant operation on $\beta_{\bp,\bq}$. This completes our proof.
\end{proof}
\subsection{Proof of Lemma~\ref{lem:ace2}}\label{app:ace2}
\begin{proof}
Given $\bp,\,\bq\, \in \mathcal{P}_n$. Let $\pi_{I}(\bp,\bq)$ and $\pi_{c}(\bp,\bq)$ denote the independent coupling and the comonotone coupling, respectively. Further, we use $\pi^{\downarrow}_{I}(\bp,\bq)$ and $\pi^{\downarrow}_{c}(\bp,\bq)$ to denote the independent and the comonotone couplings as vectors (PMFs) arranged in the non-increasing order.

Fix a $k \in [n^2]$. Consider the top-$k$ largest masses of the independent coupling $\pi^{}_{I}(\bp,\bq)$. Since $\bp$ and $\bq$ are sorted in the non-increasing order, for $i' \leq i, \, j' \leq j$ we have that $p_{i'}q_{j'}\geq p_{i}q_j$. Thus, the top-$k$ masses are concentrated towards the north-west corner of the matrix $\pi_{I}(\bp,\bq)$, consequently forming a Young's diagram. A Young's diagram is a left-justified arrangement of points in rows whose lengths form a non-increasing sequence. 

Let $\mathcal{D}$ denote the optimal set of top-$k$ masses i.e., $|\mathcal{D}|=k$, which is an Young's diagram. Now, let's say $\mathcal{D}$ uses `a' rows and `b' columns of the independent coupling matrix, then it completely lies inside the rectangular block $[a] \times [b]$. Thus,
\begin{align}
    \sum_{i=1}^{k}\pi^{\downarrow}_I(\bp,\bq)(i) &= \sum_{(i,j) \in \mathcal{D}}p_iq_j\\
    & \leq \sum_{(i,j) \in [a] \times [b]}p_iq_j\\
    &=\left(\sum_{i \leq a}p_i \right)\left(\sum_{j \leq b}q_j\right)\\
&\leq \min\left\{\sum_{i \leq a}p_i,\sum_{j \leq b}q_j \right\}    \label{eq:mass}
\end{align}
Now, we consider the same rectangular block $[a] \times [b]$ within the comonotone coupling matrix and show that it contains the exactly the mass in the RHS of~\eqref{eq:mass}.

Recall from the proof of Lemma~\ref{lem:ace1} that the comonotone coupling is the coupling between the random variables $(X,Y)$ where $X \sim \bp$ and $Y \sim \bq$ , generated in the following manner
\begin{align*}
    U &\triangleq \mathrm{Unif}([0,1])\\
    X &\triangleq \min\left\{i:\sum_{j=1}^{i}p_j \geq U\right\}  \\   Y &\triangleq \min\left\{i:\sum_{j=1}^{i}q_j \geq U\right\}
\end{align*}
Therefore, we have that
\begin{align}
    \pi_{c}(\bp,\bq)([a] \times [b]) &= \mathbb{P}(X \leq a, Y \leq b)\\
    &=\mathbb{P}\left( U \leq \sum_{j=1}^{a}p_j, U \leq \sum_{j=1}^{b}q_j \right)\\
    &=\min\left\{\sum_{i \leq a}p_i,\sum_{j \leq b}q_j \right\}\label{eq:mass2}
\end{align}
To complete our proof, we will now show that the rectangular block $[a] \times [b]$ in the comonotone coupling has at most $k$ non-zero elements. 

First, note that the comonotone coupling is supported on a monotone path through the matrix i.e., any north-east element or the south-west element around a non-zero element is zero. Thus, it starts from the north-west corner (top-left) of the matrix as the first nonzero cell, the next nonzero cell is obtained by moving $(i,j) \rightarrow (i+1,j)$ or $(i,j) \rightarrow (i,j+1)$. But if a row and column are exhausted simultaneously, the next nonzero can be $(i,j) \rightarrow (i+1,j+1)$. Consequently, the rectangular block $[a] \times [b]$ aligned in the north-west corner has at most $a+b-1$ non-zero elements. 

Recall the Young's diagram formed by the top-$k$ elements of the independent coupling matrix, which is tightly enclosed by the rectangular block $[a] \times [b]$. Therefore, $\mathcal{D}$ entirely contains the first row of length $b$ and the first column of length $a$, which implies that $k = |\mathcal{D}|\geq a+b-1$. As a result, using~\eqref{eq:mass} and~\eqref{eq:mass2} we have
\begin{align}
      \sum_{i=1}^{k}\pi^{\downarrow}_I(\bp,\bq)(i) &\leq \min\left\{\sum_{i \leq a}p_i,\sum_{j \leq b}q_j \right\}\\
      &\leq \sum_{i=1}^{a+b-1}\pi^{\downarrow}_c(\bp,\bq)(i)\\
      &\leq \sum_{i=1}^{k}\pi^{\downarrow}_c(\bp,\bq)(i)
\end{align}
Therefore, the partial sum of top-$k$ masses of the comonotone coupling is larger than the top-$k$ masses of the independent coupling. Since, the result holds for every $k \in [n^2]$, we conclude
\begin{align}
        \pi^{\downarrow}_I(\bp,\bq) \preceq \pi^{\downarrow}_c(\bp,\bq).
\end{align}
This completes our proof.
\end{proof}
\subsection{Proof of Theorem~\ref{thm:subadd_tsallis}}\label{app:subadd_tsallis}
\begin{proof}
From Lemma~\ref{lem:ace1}, we know that $  \pi_c(\bp,\bq) \sqsubseteq \bp \wedge \bq$. Since aggregation implies majorization, we have that
\begin{align}
\pi^{\downarrow}_c(\bp,\bq) \, \preceq \, \bp \wedge \bq
\end{align}    
Using Schur concavity of Tsallis entropy, we have that
\begin{align}
    T_{\alpha}(\bp \wedge\bq) \leq T_{\alpha}(\pi^{\downarrow}_c(\bp,\bq))\label{eq:llkd}
\end{align}
for every $\alpha \in [0,\infty)$.
\noindent Since, the Tsallis entropy only satisfies pseudoadditivity, we note that $T_{\alpha}(\pi^{\downarrow}_I(\bp,\bq)) \leq T_{\alpha}(\bp)+T_{\alpha}(\bq)$ holds only for $\alpha \geq 1$. Therefore, the Lemma~\ref{lem:ace2} can be leveraged to prove subadditivity only for $\alpha \geq 1$. However, we now use a different technique to prove our result bypassing the relation between the comonotone and the independent coupling, which works for every $\alpha \in [0,\infty)$.\\

\noindent Consider a PMF $\bp \in \mathcal{P}_n$. We have that 
\begin{align}
    T_{\alpha}(\bp)\,=\,\sum^{n}_{i=1}\dfrac{p_i-p_i^\alpha}{\alpha-1}\,=\,\sum^{n}_{i=1}h_{\alpha}(p_i)
\end{align}
where $h_{\alpha}(x):=\dfrac{x-x^{\alpha}}{\alpha-1}$.\\

Let $B_{\bp}$ and $B_\bq$ denote the sets of cumulative-sum breakpoints of the PMFs $\bp$ and $\bq$, respectively. Then, the common refinement of both the interval partitions is generated by the set $B_\bp \cup B_\bq$, and its interval lengths are the probability masses of the comonotone coupling $\pi_c(\bp,\bq)$.

Let $B$ be a finite set such that $B\subset(0,1)$, and
$0=b_0<b_1<\cdots<b_m<b_{m+1}=1$
be the ordered elements of $B\cup\{0,1\}$. Now, define
$g(B):=(b_1-b_0,b_2-b_1,\ldots,b_{m+1}-b_m)$.
Thus, $g(B)$ is a vector (more precisely, a PMF) of interval lengths induced by the breakpoints in $B$.
Now, let us define the function $
F_\alpha(B):=T_\alpha(g(B)).
$
Then, we note that
\begin{align}
F_\alpha(B_{\bp})=T_\alpha(\bp),
\qquad
F_\alpha(B_{\bq})=T_\alpha(\bq),
\end{align}
and from the definition of the comonotone coupling (cf. section~\ref{sec:preliminaries}), we have
\begin{align}
F_\alpha(B_{\bp}\cup B_{\bq}) =
T_\alpha(\pi_c^\downarrow(\bp,\bq)).
\end{align}
Therefore, we will now show that
\begin{align}
F_\alpha(B_{\bp}\cup B_{\bq})
\leq
F_\alpha(B_{\bp})+F_\alpha(B_{\bq}).
\end{align}

Assume first that $\alpha \in (0,\infty)$ and $\alpha \neq 1$. If an interval of length
$a+b$ is split into two intervals of lengths $a$ and $b$, then the increase in
$F_\alpha(\cdot)$ is
\begin{align}
\Delta_\alpha(a,b)
&=
h_\alpha(a)+h_\alpha(b)-h_\alpha(a+b)\\
&=
\frac{(a+b)^\alpha-a^\alpha-b^\alpha}{\alpha-1}
\end{align}
For $a,b>0$, we have that
\begin{align}
\frac{\partial}{\partial a}\Delta_\alpha(a,b)
=
\frac{\alpha\big((a+b)^{\alpha-1}-a^{\alpha-1}\big)}{\alpha-1}>0,
\end{align}
and similarly $\frac{\partial}{\partial b}\Delta_\alpha(a,b)>0$. Consequently, $\Delta_\alpha(a,b)$ is increasing in both arguments for every$\alpha \in (0,\infty)$, and $\alpha \neq 1$.

Now, let the cumulative-sum breakpoint set of $\bq$ be $B_{\bq}=\{x_1,\ldots,x_m\}$. Further, let
\begin{align}
A_t:=\{x_1,\ldots,x_t\},
\qquad
C_t:=B_{\bp}\cup A_t.
\end{align}
such that $A_m = B_{\bq}$. Furthermore, $C_m = B_{\bp} \cup B_{\bq}$ and $g(C_m)$ is the comonotone coupling vector, up to permutations.
Now, when the breakpoint $x_t$ is added to $A_{t-1}$, it splits some existing interval into
two intervals of lengths $a_t$ and $b_t$. When the same breakpoint $x_t$ is
added to $C_{t-1}$, we note that the interval being split is contained inside the former
interval. Thus, it is split into two intervals of lengths, say $a_t'$ and $b_t'$, where
\begin{align}
0\leq a_t'\leq a_t,
\qquad
0\leq b_t'\leq b_t.
\end{align}
Since $\Delta_\alpha$ is increasing in both arguments, we get
\begin{align}
F_\alpha(C_t)-F_\alpha(C_{t-1})
\leq
F_\alpha(A_t)-F_\alpha(A_{t-1}).
\end{align}
Summing over all breakpoints of $\bq$ i.e., $t=1,\ldots,m$, we obtain
\begin{align}
F_\alpha(B_{\bp}\cup B_{\bq})-F_\alpha(B_{\bp})
\leq
F_\alpha(B_{\bq})-F_\alpha(\varnothing).
\end{align}
But $F_\alpha(\varnothing)=T_\alpha((1))=0$. Therefore,
\begin{align}
F_\alpha(B_{\bp}\cup B_{\bq})
\leq
F_\alpha(B_{\bp})+F_\alpha(B_{\bq}).
\end{align}
Equivalently, we have that
\begin{align}
T_\alpha(\pi_c^\downarrow(\bp,\bq))
\leq
T_\alpha(\bp)+T_\alpha(\bq).
\end{align}
Combining this with~\eqref{eq:llkd}, we obtain
\begin{align}
T_\alpha(\bp\wedge\bq)
\leq
T_\alpha(\pi_c^\downarrow(\bp,\bq))
\leq
T_\alpha(\bp)+T_\alpha(\bq).
\end{align}
This proves subadditivity for $\alpha \in (0,\infty)$, $\alpha\neq1$.
The case $\alpha=1$ (Shannon entropy) follows by taking the limit $\alpha\to1$, which yields $h_1(x)=-x\log x$. Consequently, $\Delta_1(a,b)$ is increasing in both the arguments. A similar analysis as above recovers the already-known result for the Shannon entropy.\\

\noindent \textbf{(Proof for $\alpha =0$). } For $\alpha=0$, the Tsallis entropy reduces to,
\begin{align}
T_0(\mathbf r)=|\mathrm{supp}(\mathbf r)|-1.
\end{align}
If $a=|\mathrm{supp}(\bp)|$ and
$b=|\mathrm{supp}(\bq)|$,
then we have that
\begin{align}
|\mathrm{supp}(\bp\wedge\bq)|\le \max(a,b).
\end{align}
Consequently,
\begin{align}
T_0(\bp\wedge\bq)
\,\leq\,
\max(a,b)-1
\,\leq\,
a+b-2
\,=\,
T_0(\bp)+T_0(\bq),
\end{align}
as $a,\,b \geq 1$. Thus, the subadditivity property also holds for Tsallis entropy of order zero.\\

\noindent \textbf{(Conditions for equality). }
If either $\bp$ or  $\bq$ is deterministic, say
$\bp=(1,0,\ldots,0)$, then we have that
$ \bp \wedge \bq=\bq$.
Consequently, $T_\alpha(\bp)=0$ and the equality holds for all $\alpha \in [0,\infty)$.

Conversely, suppose neither $\bp$ nor $\bq$ is deterministic.
Then, both $B_{\bp}$ and $B_{\bq}$ are non-empty. Therefore, for
$\alpha \in (0,\infty)$, the first breakpoint of $B_{\bq}$, when added after
the breakpoints of $B_{\bp}$, gives a strictly smaller entropy gain than
when it is added to the unpartitioned interval $[0,1]$, i.e.,
\begin{align}
F_\alpha(B_{\bp}\cup B_{\bq})
<
F_\alpha(B_{\bp})+F_\alpha(B_{\bq}),
\end{align}
Consequently, we have
\begin{align}
T_\alpha(\bp\wedge\bq) < T_\alpha(\bp)+T_\alpha(\bq).
\end{align}
For $\alpha=0$, if neither distribution is deterministic, then $a,\,b\geq 2$, and
\begin{align}
T_0(\bp\wedge\bq)
\le
\max(a,b)-1
<
a+b-2
=
T_0(\bp)+T_0(\bq).
\end{align}
Therefore, the equality holds if and only if at least one of $\bp$ and
$\bq$ is a deterministic PMF. This completes our proof.
\end{proof}

\subsection{Proof of Theorem~\ref{thm:countersupermod}}\label{app:countersupermod}
\begin{proof}
We prove the result by illustrating two different pairs of PMFs in $\mathcal{P}_n$. The first pair violates
submodularity while the second pair violates supermodularity, for different values of $\alpha \in (0,1)$.

Given $\bp, \bq$ $\in \mathcal{P}_n$, let us define,
\begin{align*}
   \Delta_\alpha(\bp,\bq)
\triangleq
H_\alpha(\bp\wedge \bq)+H_\alpha(\bp\vee \bq)-H_\alpha(\bp)-H_\alpha(\bq). 
\end{align*}
Now, given an order $\alpha$, we note that supermodularity is equivalent to $\Delta_\alpha(\bp,\bq)\geq 0$ for all $\bp, \bq\in\mathcal P_n$, while submodularity is equivalent to $\Delta_\alpha(\bp,\bq)\le 0$ for all $\bp,\bq\in\mathcal P_n$.\\
\begin{figure}[H]
  \centering
  \input{example1}
  \caption{(Example~$1$) : Supermodular behavior of the Rényi entropy for order $\alpha \in (0,1)$.}\label{fig:example1}
  \vspace*{-3mm}
\end{figure}
\begin{figure}[H]
  \centering
  \input{example2}
  \caption{(Example~$2$) : Submodular behavior of the Rényi entropy for order $\alpha \in (0,1)$.}\label{fig:example2}
  \vspace*{-3mm}
\end{figure}
\noindent \textbf{Example $1$} - \emph{Pair with $\Delta_\alpha(\bp,\bq)>0$:} Let $\bp$, $\bq$ $\in$ $\mathcal{P}_3$ as,
\begin{align*}
    \bp &= (0.6,0.2,0.2)\\
    \bq &= (0.45,0.4,0.15)    
\end{align*}
Consequently, the glb and lub are as follows,
\begin{align*}
    \bp \wedge \bq &= (0.45,0.35,0.2)\\
    \bp \vee \bq &= (0.6,0.25,0.15)
\end{align*}
Therefore, we have 
\begin{align}
\begin{array}{c|c}
\alpha & \Delta_\alpha(\bp,\bq) \\
\hline
0 & 0 \\
0.2 & 0.00580417 \\
0.5 & 0.01387056 \\
0.7 & 0.01882329 \\
0.9 & 0.02343356 \\
1 & 0.02560746 \\
2 & 0.04204643 \\
\infty & 0
\end{array}
\end{align}
Therefore, for the above example we have that the Rényi entropy is not submodular for $\alpha \in (0,1)$.\\

\noindent \textbf{Example $2$} - \emph{Pair with $\Delta_\alpha(\bp,\bq)<0$:} Let $\bp$, $\bq$ $\in$ $\mathcal{P}_4$ as,
\begin{align*}
    \bp &= 0.398886918,0.370328848,0.228811150,0.001973084)\\
    \bq &= 0.539996140,0.229554617,0.116684354,0.113764889)   
\end{align*}
Consequently, the glb and lub are as follows,
\begin{align*}
    \bp &\wedge \bq \\&= (0.398886918,0.370328848,0.117019345,0.113764889)\\
    \bp &\vee \bq \\&= (0.539996140,0.229554617,0.228476159,0.001973084)
\end{align*}
Therefore, we have 
\begin{align}
\begin{array}{c|c}
\alpha & \Delta_\alpha(\bp,\bq) \\
\hline
0 & 0 \\
0.2 & -0.00090983 \\
0.5 & -0.00234206 \\
0.7 & -0.00198286 \\
0.9 & -0.00067566 \\
1 & 0.00022487 \\
2 & 0.00977669 \\
\infty & 0
\end{array}
\end{align}
Therefore, for the above example we have that the Rényi entropy is not supermodular for $\alpha \in (0,1)$.

Consequently, the first example gives $\Delta_\alpha(\bp,\bq)>0$ for $\alpha\in(0,1)$, and hence violates submodularity. The second example gives $\Delta_\alpha(\bp,\bq)<0$ for $\alpha\in(0,1)$, and hence violates supermodularity. Therefore, for every $\alpha\in(0,1)$, $H_\alpha$ is neither supermodular nor submodular on the majorization lattice $(\mathcal P_n,\preceq,\wedge,\vee)$. This completes the proof.
\end{proof}

\subsection{Proof of Theorem~\ref{thm:supermod_tsallis}}\label{app:supermod_tsallis}
\begin{proof}
Our proof method is based on studying the Lorenz curves of $\bp$, $\bq$, $\bp \wedge \bq$, and $\bp \vee \bq$. We handle the cases of $\alpha<1$ and $\alpha \geq 1$ separately below.\\

\noindent \textbf{(Proof for $\alpha<1$). }Let $S_{\alpha}(\bp):=\sum_{i=1}^{n} p^{\alpha}_i$, as defined in the proof of Theorem~\ref{thm:supermod}. Further, let $L_\bp$, $L_\bq$, $L_\bw$, $L_\bv$, $m$, and $M$ be also as defined in the proof of Theorem~\ref{thm:supermod}. Them, we have that, 
\begin{align}\label{eq:remt}
       \int^{n}_{0}(m'(t))^{\alpha}dt\;+ \int^{n}_{0}(M'(t))^{\alpha}dt\; = S_{\alpha}(\bp) +S_{\alpha}(\bq).
\end{align}
Now, fix an interval $I_k \triangleq (k-1,k)$. Then, $m$ is a piece-wise linear concave function within $I_k$ while $L_{\bw}$ is a straight line joining $m(k-1)$ to $m(k)$. Thus, 
\begin{align*}
    L'_{\bw}(t)&=\frac{m(k)-m(k-1)}{1}\\&=\int^{k}_{k-1}m'(t)dt
\end{align*}
Since $u \rightarrow u^{\alpha}$ is a concave map for $\alpha < 1$, using Jensen's inequality we obtain
\begin{align}\label{eq:main1t}
     (L'_{\bw}(t))^{\alpha} \geq \int^{k}_{k-1}m'(t)^{\alpha}dt
\end{align}
Now, we consider the function $M$ in the same interval $I_k$. Unlike $m$, $M$ is a piece-wise linear (not necessarily concave) function while $F_{\bv}$ is a straight line joining $M(k-1)$ to $M(k)$. If $I_k$ has no non-integer crossings, then $F'_{\bv}(t)=M'(t)$. 

However, if there are crossings then $M$ is convex and we need to concavify it to get $F_{\bv}$. Thus,
\begin{align*}
        F'_{\bv}(t)&=\frac{M(k)-M(k-1)}{1}\\&=\int^{k}_{k-1}M'(t)dt
\end{align*}
Since $u \rightarrow u^{\alpha}$ is a concave map for $\alpha<1$, using Jensen's inequality we have
\begin{align}\label{eq:main2t}
     (F'_{\bv}(t))^{\alpha} \geq \int^{k}_{k-1}(M'(t))^{\alpha}dt
\end{align}
Now, if $F_{\bv}$ is a piecewise-linear concave function, then it is the lub $L_{\bv}$. Otherwise, it needs to be concavified and consequently the the lub $L_{\bv}$ is the least concave majorant (LCM) of $F_{\bv}$.

Therefore applying Jensen again over the pooled intervals used by the LCM operation, we get
\begin{align}\label{eq:main3t}
S_\alpha(\bv)
    &=
    \int_0^n (L_{\bv}'(t))^\alpha\,dt\\
    &\geq
    \int_0^n (F_{\mathbf v}'(t))^\alpha\,dt
    \geq
    \int_0^n (M'(t))^\alpha\,dt .
\end{align}
On summing~\eqref{eq:main1t} over all intervals $I_k$ for $k \in [n]$ and using~\eqref{eq:main3t}, we obtain
\begin{align}
    S_{\alpha}(\bw)+S_{\alpha}(\bv) \geq \int^{n}_{0}(m'(t))^{\alpha}dt +\int^{n}_{0}(M'(t))^{\alpha}dt 
\end{align}
Therefore, from~\eqref{eq:remt} we have
\begin{align}\label{eq:rem2t}
     S_{\alpha}(\bw)+S_{\alpha}(\bv) \geq S_{\alpha}(\bp)+S_{\alpha}(\bq).
\end{align}
Recall that $T_{\alpha}(\bx):=\frac{1-S_{\alpha}(\bx)}{\alpha-1}$, therefore we have that
\begin{align}
    1-S_{\alpha}(\bw)+1-S_{\alpha}(\bv) \leq 1-S_{\alpha}(\bp)+1-S_{\alpha}(\bq)
\end{align}
Since $\alpha -1 < 0$, on dividing both sides by $\alpha-1$ we obtain
\begin{align}
    \frac{1-S_{\alpha}(\bw)}{\alpha-1}+\frac{1-S_{\alpha}(\bv)}{\alpha-1} \geq \frac{1-S_{\alpha}(\bp)}{\alpha-1}+\frac{1-S_{\alpha}(\bq)}{\alpha-1}    
\end{align}
Therefore, 
\begin{align}
    T_{\alpha}(\bp)+T_{\alpha}(\bq) \leq T_{\alpha}(\bw)+T_{\alpha}(\bv).
\end{align}
This completes our proof and shows that Tsallis entropy is supermodular for the case of $\alpha < 1$.\\

\noindent \textbf{(Proof for $\alpha \geq 1$). } The proof for the case of $\alpha>1$ directly follows from the proof of Theorem~\ref{thm:supermod}. Let $S_{\alpha}(\bp):=\sum_{i=1}^{n} p^{\alpha}_i$, as defined in the proof of Theorem~\ref{thm:supermod}. Thus, from~\eqref{eq:rem2} we know that for $\alpha \geq 1$,
\begin{align}
         S_{\alpha}(\bw)+S_{\alpha}(\bv) \leq S_{\alpha}(\bp)+S_{\alpha}(\bq),
\end{align}
where $\bw:=\bp \wedge \bq$ and $\bv:=\bp \vee \bq$. Recall that $T_{\alpha}(\bx):=\frac{1-S_{\alpha}(\bx)}{\alpha-1}$, therefore we have that
\begin{align}
    1-S_{\alpha}(\bw)+1-S_{\alpha}(\bv) \geq 1-S_{\alpha}(\bp)+1-S_{\alpha}(\bq)
\end{align}
Since $\alpha -1 > 0$, on dividing both sides by $\alpha-1$ we obtain
\begin{align}
    \frac{1-S_{\alpha}(\bw)}{\alpha-1}+\frac{1-S_{\alpha}(\bv)}{\alpha-1} \geq \frac{1-S_{\alpha}(\bp)}{\alpha-1}+\frac{1-S_{\alpha}(\bq)}{\alpha-1}    
\end{align}
Therefore, 
\begin{align}
    T_{\alpha}(\bp)+T_{\alpha}(\bq) \leq T_{\alpha}(\bw)+T_{\alpha}(\bv).
\end{align}
This completes our proof and shows that Tsallis entropy is supermodular for the case of $\alpha > 1$. The $\alpha \to 1$ case follows due to the continuity of Tsallis entropy in order $\alpha$.\\

\noindent \textbf{(Modularity for $\alpha = 0$). }Let
$ a := |\mathrm{supp}(\bp)|$ and $b := |\mathrm{supp}(\bq)|$,
where $a,b\, \leq \, n$. For $\alpha\rightarrow0$, we have
\begin{align*}
    |\mathrm{supp}(\bp\wedge\bq)|=\max(a,b),
    \quad
    |\mathrm{supp}(\bp\vee\bq)|=\min(a,b).
\end{align*}
Therefore,
\begin{align}
    T_0(\bp\wedge\bq)
    +
    T_0(\bp\vee\bq)
    &=
    \max(a,b)
    +
    \min(a,b) -2\\
    &=
    a + b-2\\
    &=
    T_0(\bp)+T_0(\bq).
\end{align}
This completes our proof.
\end{proof}
\end{document}